

\documentclass[journal,transmag]{IEEEtran}
\usepackage[comma,numbers,square,sort&compress]{natbib}

\usepackage{amsmath}
\usepackage{amssymb}
\usepackage{float}
\usepackage{amsthm}
\usepackage{graphicx}
\usepackage{epstopdf}
\usepackage{color}
\usepackage{verbatim}
\usepackage{tikz,times}
\usepackage{algorithm}
\usepackage{algorithmic}
\pdfminorversion=4
\newtheorem{definition}{\textbf{Definition}}
\newtheorem{remark}{\textbf{Remark}}
\newtheorem{theorem}{\textbf{Theorem}}
\newtheorem{lemma}{\textbf{Lemma}}
\newtheorem{assumption}{\textbf{Assumption}}

\allowdisplaybreaks[4]
\begin{document}

	\title{Consistent distributed state estimation with global observability over sensor network}
	
	\author{Xingkang~He,
		~Wenchao~Xue,
  Haitao Fang
		
		\thanks{Xingkang He, Wenchao Xue and Haitao Fang  are with the Key Laboratory of System and Control,  Academy of Mathematics and Systems Science,
			Chinese Academy of Sciences, Beijing, China. (xkhe@amss.ac.cn, wenchaoxue@amss.ac.cn, htfang@iss.ac.cn)}
		
	}
	
	
	\maketitle

	\begin{abstract}
	This paper studies the distributed state estimation problem for a class of discrete time-varying systems over sensor networks. Firstly, it is shown that a networked Kalman filter with optimal gain parameter    is actually a centralized filter, since it requires each sensor to have global information which is usually forbidden in large networks. Then, a sub-optimal distributed Kalman filter (DKF)  is proposed by employing the covariance intersection (CI) fusion  strategy. It is proven that the proposed DKF is of consistency, that is, the upper bound of error covariance matrix can be provided by the filter in real time. The consistency also enables the design of adaptive CI weights for better filter precision. Furthermore, the boundedness of covariance matrix and the convergence of the proposed filter are proven based on the strong connectivity of directed network topology and the global observability which permits the sub-system with local sensor's measurements to be unobservable. Meanwhile, to keep the covariance of the estimation error bounded, the proposed DKF does not require the system matrix to be nonsingular at each moment, which seems to be a necessary condition in the main DKF designs under global observability. Finally, simulation results of two examples show the effectiveness of the algorithm in the considered scenarios. 
	\end{abstract}

	\section{Introduction}
	Wireless sensor networks (WSNs) usually consist of intelligent sensing devices located at different geographical positions. Since multiple sensors can collaboratively carry out the task by information communication via the wireless channels, WSNs have been widely applied in environmental monitoring \cite{Cao2008Development}, collaborative information processing \cite{Kumar2012Computational}, data collection \cite{Solis2007In}, distributed signal estimation \cite{Schizas2008Consensus}, and etc.
	In the past decades, state estimation problems of WSNs have drawn more and more attention of researchers. Two approaches are usually considered in existing work. The first one is centralized filtering, i.e., a data center is set to collect  measurements from all sensors at each sampling moment. The centralized Kalman filter (CKF) can be directly designed such that the minimum variance state estimator is achieved for
	linear systems with Gaussian noises. However, the centralized frame is fragile since it could be easily influenced by link failure, time delay, package loss and so on. The second approach, on the contrary, utilizes distributed strategy, in which no central sensor exists. The implementation of this strategy simply depends on information exchange between neighbors \cite{Das2015Distributed, khan2011coor, Khan2014Collaborative,yang2017stochastic,speranzon2008distributed,farina2016partition,boem2015distributed,Sun2016Dynamic}. Compared with the centralized approach, the distributed frame  has stronger ability in robustness and parallel processing.

	Information communication between sensors plays an important role in the design of distributed filtering. 
	Generally, communication rate between neighbors could be faster than the rate of measurement sensing. Fast information exchange between neighbors supports the consensus strategy which can achieve the agreement of information variables (e.g. measurements \cite{Das2015Distributed}) of sensors. Actually, \cite{olfati2007distributed, khan2008distributing, carli2008distributed, Cat2010Diffusion} have shown some remarkable results on the convergence and the consensus of local filters with the consensus strategy. 
	However, faster communication rate probably needs larger capability of computation and transmission to conduct the consensus before the updates of filters.
	In the single-time scale, the neighbor communication and measurement sensing share the same rate, which can not only 
	reduce communication burden but also result in computation cost linearly matching with sensor number over the network \cite{Khan2014Collaborative, Matei2012Consensus, zhou2013distributed, liu2015event}.
	Additionally, the DKF algorithm with faster communication rate can be designed by combining the filter with single-time scale and the consensus process.
	Hence, this paper considers distributed state estimation algorithms in the single-time scale.

	Parameter design  of algorithms is one of the most essential parts in the study of distributed state estimation problems.
	{\color{black}
		In \cite{He16con}, it is shown that a networked Kalman filter with optimal gain parameter   is actually a centralized filter since the calculation of time-varying gain parameter is dependent on information of non-neighbors. Then a modified sub-optimal distributed filter under undirected graph is proposed.}
	Distributed filters with constant filtering gains are well studied in \cite{Khan2014Collaborative, khan2011coor, Khan2010On}, which evaluate the relationship between the instability of system and the boundedness of estimation error.
	In \cite{Das2015Distributed}, measurement consensus based DKF is presented and design methods of the consensus weights as well as the filtering gains are rigorously studied.
	In \cite{Cat2010Diffusion}, a general diffusion DKF  based on time-invariant weights is proposed and  performance of the distributed algorithm is analyzed in detail.
	To achieve better estimation precision, time-varying parameters are considered in \cite{speranzon2008distributed}, which provides a distributed minimum variance estimator for a scalar time-varying signal.

	In  \cite{boem2015distributed}, a distributed prediction method for dynamic systems is proposed to minimize bias and variance. The  method can effectively compute  time-varying weights of the distributed algorithm.
	A scalable partition-based distributed Kalman filter is investigated in \cite{farina2016partition} to deal with  coupling terms and uncertainty among  sub-systems. Furthermore, stability of this algorithm is guaranteed through designing proper parameters. 
	Nevertheless, the work mentioned above have not considered the distributed filter problem with global observability condition which allows the sub-system with local sensor's measurements to be unobservable. 

	Research of distributed filter for time-varying systems based on global observability is an important but difficult problem.
	Since sensors of WSNs are sparsely located in different positions, the observability condition assumed for the sub-system with respect to one sensor is much stronger than that assumed for the overall system based on  global network. 
	However, the work mentioned above pay little attention to boundedness  analysis of covariance matrix and convergence analysis of the algorithm under global observability. Regarding
	time-invariant systems, conditions on global observability are usually determined by the system matrix, the network topology and the global observation matrix which collects model information of all sensors \cite{Khan2014Collaborative,khan2011coor,Khan2010On}. This means that  distributed filters with constant filtering gain can be designed to guarantee  stability of the algorithm. However, most of the methods fail for  time-varying systems. \cite{Battistelli2014Kullback, Battistelli2015Consensus} give some pioneer work on building consensus DKF algorithms under the global observability for time-invariant systems. Nevertheless, they require the assumption that the system matrix is nonsingular, which seems to be severe for time-varying systems at every moment.
	In this paper, we aim to develop a scalable and totally distributed algorithm for a class of discrete linear time-varying systems in the WSNs. The main contributions are summarized as follows.
	\begin{enumerate}
		\item The proposed consistent distributed Kalman filter (CDKF) guarantees the error covariance matrix can be upper bounded by a parameter matrix, which is timely calculated 	by each sensor using local information. This property is quite of importance since it supports an effective error evaluation principle in real time.
		\item A set of adaptive weights based on CI fusion is determined through a Semi-definite Programming (SDP) convex optimization method. It is proven that the proposed adaptive CI weights ensure lower error covariance bound than that with constant CI weights which are mainly used in existing work \cite{Battistelli2014Kullback, Battistelli2015Consensus, Battistelli2016stability}. Therefore, adaptive CI weights can achieve  improvement of estimation performance.
		\item Global observability instead of local observability is assumed for the  system over networks. This allows the sub-system with local sensor's measurements to be unobservable. {\color{black}Additionally, the assumption of system matrix being nonsingular at each moment  is loosened \cite{Battistelli2014Kullback, Battistelli2015Consensus, Battistelli2016stability,He16con,Wang2017On}}. Since the nonsingularity of system matrix at each moment is difficult to be satisfied for time-varying systems, the proposed filter can greatly enlarge application range of the distributed state estimation algorithms.
	\end{enumerate}
	
	The remainder of this paper is organized as follows. Section 2 presents some necessary preliminaries and notations of this paper. Section 3 is on  problem formulation and distributed filtering algorithms. Section 4 considers performance of the proposed algorithm. Section 5 is on  simulation studies. The conclusion of this paper is given in Section 6.
	
	\section{Preliminaries and Notations}
	
	Let $\mathcal{G=(V,E,A)}$ be a directed graph, which consists of  the set of nodes $\mathcal{V}=\{1,2,\cdots,N\}$, the set of edges $\mathcal{E}\subseteq \mathcal{V}\times \mathcal{V}$ and the weighted adjacent matrix $\mathcal{A}=[a_{i,j}]$. In the weighted adjacent matrix $\mathcal{A}$, all elements are nonnegative, row stochastic and the diagonal elements are all positive, i.e., $a_{i,i}> 0,a_{i,j}\geq 0,\sum_{j\in \mathcal{V}}a_{i,j}=1$. If $a_{i,j}>0,j\neq i$, then there is an edge $(i,j)\in \mathcal{E}$, which means Node $i$ can directly receive the information of Node $j$. In this situation, Node $j$ is called the neighbor of Node $i$. All neighbors of Node $i$ including itself can be represented by the set $\{j\in\mathcal{V}|(i,j)\in \mathcal{E}\}\bigcup\{i\}\triangleq \mathcal{N}_{i}$, whose size is denoted as $|\mathcal{N}_{i}|$. $\mathcal{G}$ is called strongly connected if for any pair nodes $(i_{1},i_{l})$, there exists a directed path from $i_{1}$ to $i_{l}$ consisting of edges $(i_{1},i_{2}),(i_{2},i_{3}),\cdots,(i_{l-1},i_{l})$. According to \cite{horn2012matrix} and \cite{varga2009matrix}, the following lemma is obtained.
	\begin{lemma}\label{lem_primitive}
		If the directed graph $\mathcal{G=(V,E,A)}$ is strongly connected with $\mathcal{V}=\{1,2,\cdots,N\}$, then all elements of  $\mathcal{A}^{s},s\geq N-1,$ are positive.
	\end{lemma}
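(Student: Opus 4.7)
The plan is to interpret $(\mathcal{A}^s)_{i,j}$ combinatorially as a weighted sum over length-$s$ walks from node $i$ to node $j$ in $\mathcal{G}$, and then to exhibit at least one such walk with strictly positive weight for every pair $(i,j)$ when $s \geq N-1$. Concretely, expanding the matrix product gives
\begin{equation*}
(\mathcal{A}^s)_{i,j} \;=\; \sum_{k_0=i,\,k_s=j}\; \prod_{r=0}^{s-1} a_{k_r, k_{r+1}},
\end{equation*}
which is nonnegative, and is strictly positive as soon as a single walk $i = k_0 \to k_1 \to \cdots \to k_s = j$ exists along edges whose weights $a_{k_r,k_{r+1}}$ are all strictly positive.

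Next I would use the strong connectivity hypothesis to extract, for arbitrary $i,j \in \mathcal{V}$, a directed path from $i$ to $j$ consisting of edges in $\mathcal{E}$. Since such a path does not revisit any vertex, its length $\ell$ satisfies $\ell \leq N-1$, and every edge along it has positive weight by the definition of $\mathcal{E}$ (i.e., $a_{k_r,k_{r+1}} > 0$ on the path).

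Then I would exploit the hypothesis that every diagonal entry $a_{i,i}$ is positive, which corresponds to a positive-weight self-loop at each node. Given $s \geq N-1 \geq \ell$, I can pad the above length-$\ell$ path to a length-$s$ walk by inserting $s - \ell$ self-loops at node $i$ before traversing the path (or distributed arbitrarily along the walk). The resulting product of weights is $a_{i,i}^{\,s-\ell}$ times the product along the path, which is strictly positive; hence $(\mathcal{A}^s)_{i,j} > 0$. Since $(i,j)$ was arbitrary, every entry of $\mathcal{A}^s$ is positive for $s \geq N-1$.

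I do not anticipate a substantial obstacle: the proof is a direct combinatorial argument, and the two standing assumptions (strong connectivity and positive diagonals) align exactly with what is needed, giving respectively the existence of a short path and the freedom to lengthen it. The only subtlety worth flagging is that the positivity of the diagonal is essential; without it one would have to invoke primitivity via gcd-of-cycle-lengths arguments, but here the self-loops make that machinery unnecessary, which is why $s \geq N-1$ (rather than a larger exponent depending on the graph's period) suffices.
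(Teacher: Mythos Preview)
Your argument is correct and complete. The combinatorial interpretation of $(\mathcal{A}^s)_{i,j}$ as a sum over length-$s$ walks, combined with the existence of a path of length at most $N-1$ (from strong connectivity) and the ability to pad with self-loops (from $a_{i,i}>0$), is exactly the standard elementary proof of this fact.

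The paper itself does not supply a proof of this lemma; it simply cites Horn--Johnson and Varga as references and states the result. So rather than reproducing a specific argument from the paper, you have written out the direct proof that those references would ultimately rely on. In that sense your approach is more self-contained than the paper's treatment. Your closing remark is also apt: the positive diagonal is what makes $\mathcal{A}$ primitive with index at most $N-1$, bypassing any period/gcd considerations that would otherwise be needed for a general irreducible nonnegative matrix.
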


	Throughout this paper, the notations used are fairly standard.  The superscript ``T" represents  transpose. The notation $A\geq B$ (or $A>B$), where $A$ and $B$ are both symmetric matrices, means that $A-B$ is a positive semidefinite (or positive definite) matrix. $I_{n}$ stands for the identity matrix with $n$ rows and $n$ columns. $E\{x\}$ denotes the mathematical expectation of the stochastic variable $x$, and  $blockcol\{\cdot\}$ means the block elements are arranged in columns. $blockdiag\{\cdot\}$ and $diag\{\cdot\}$ represent the diagonalizations of block elements and scalar elements, respectively. $tr(P)$ is the trace of matrix $P$. The notation $\otimes$ stands for  tensor product. The integer set from $a$ to $b$ is denoted as $[a:b]$.
	\section{Problem Formulation and Distributed Filtering Algorithms}
	Consider the following time-varying stochastic system
	\begin{equation}\label{system_all}
	\begin{cases}
	x_{k+1}=A_{k}x_{k}+\omega_{k},\quad k=0,1,2,...,\\
	y_{k,i}=H_{k,i}x_{k}+v_{k,i},\quad i=1,2,\cdots,N,
	\end{cases}
	\end{equation}
	where
	$x_{k}\in\mathbb{R}^{n}$ is the state at the $k$th moment, $A_{k}\in\mathbb{R}^{n\times n}$ is the known system matrix, $\omega_{k}\in\mathbb{R}^{n}$ is the process noise with  covariance matrix $Q_{k}\in\mathbb{R}^{n\times n}$,
	$y_{k,i}\in\mathbb{R}^{m}$ is the measurement vector obtained via Sensor $i$,  $H_{k,i}\in\mathbb{R}^{m\times n}$ is the observation matrix of Sensor $i$ and $v_{k,i}$ is the observation noise with covariance matrix $R_{k,i}\in\mathbb{R}^{m\times m}$. 
	$N$ is the number of sensors over the network.

	{\color{black}
		\begin{definition}\label{def_1}
			The $i$th sub-system of the overall system (\ref{system_all}) is defined as the system with respect to $(A_{k},H_{k,i})$.
		\end{definition}
		
		In this paper, the following assumptions are needed.
		\begin{assumption}\label{ass_noise}
			The sequences $\{\omega_{k}\}_{k=0}^{\infty}$ and $\{v_{k,i}\}_{k=0}^{\infty}$ are zero-mean, Gaussian,  white and uncorrelated. Also, $R_{k,i}$ is positive definite, $\forall k\geq0$. There exist two constant positive definite matrices $\bar Q_1$ and $\bar Q_2$ such that $\bar Q_1\leq Q_{k}\leq\bar Q_2, \forall k\geq 0$. The initial state $x_{0}$ is generated by a zero-mean white Gaussian process independent of $\{\omega_{k}\}_{k=0}^{\infty}$ and $\{v_{k,i}\}_{k=0}^{\infty}$, subject to $E\{x_{0}x_{0}^T\}=P_{0}$.
		\end{assumption}

		\begin{assumption}\label{ass_observable}
			The system	(\ref{system_all}) is uniformly completely observable, i.e., there exist a positive integer $\bar N$ and positive constants $\alpha,\beta$ such that
			\begin{equation*}
			0<\alpha I_{n}\leq \sum_{j=k}^{k+\bar N}\Phi^T_{j,k}H_{j}^TR_{j}^{-1}H_{j}\Phi_{j,k}\leq \beta I_{n},
			\end{equation*}
			for any $k\geq 0$,
			where
			\begin{align*}
			\begin{cases}
			\Phi_{k,k}=I_{n},\Phi_{k+1,k}=A_{k},\Phi_{j,k}=\Phi_{j,j-1}\cdots \Phi_{k+1,k},\\
			H_{k}= blockcol\{H_{k,1},H_{k,2},\cdots,H_{k,N}\},\\
			R_{k}=blockdiag\{R_{k,1},R_{k,2},\cdots,R_{k,N}\}.
			\end{cases}
			\end{align*}
		\end{assumption}
		\begin{assumption}\label{ass_topology}
			The topology of the network $\mathcal{G=(V,E,A)}$ is a fixed directed graph and it is strongly connected.
		\end{assumption}
		\begin{assumption}\label{ass_A_bound}
			There exists a positive scalar $\beta_{1}$, such that
			\begin{equation*}
			\lambda_{max}(A_{k}A_{k}^T)\leq\beta_{1}, \forall k\geq0.
			\end{equation*}
		\end{assumption}
		
		\begin{assumption}\label{ass_A}
			There exist a sequence set $\mathcal{K}=\{k_{l},l\geq1\}$, an integer $L\geq N+\bar N$ and a scalar $\beta_{2}>0$, such that
			\begin{equation*}
			\begin{cases}
			\sup_{l\geq1} (k_{l+1}-k_{l})<\infty,\\
			\inf_{l\geq1} (k_{l+1}-k_{l})>0,\\
			\lambda_{min}(A_{k_{l}+s}A_{k_{l}+s}^T)\geq\beta_{2},\forall k_{l}\in \mathcal{K},s=0,\cdots,L-1.
			\end{cases}
			\end{equation*}
		\end{assumption}

		\begin{remark}
			Assumption $\ref{ass_observable}$ is a basic global observability condition which does not require any sub-system with local sensor's measurements to be observable.  
			Assumption \ref{ass_topology} is quite general for the direct topology graph of the network, since strong connectivity is the basic condition for the implementation of distributed algorithms which rely on information spread over the networks.
			Assumption \ref{ass_A} does not require $A_{k}$ to be nonsingular at each moment \cite{Battistelli2015Consensus,Battistelli2016stability,He16con,Wang2017On}.
		\end{remark}
	}
	
	%
	%
	
	In this paper, we consider the following general distributed filtering structure for  Sensor $i$ , which mainly consists of three parts:
	\begin{equation*}
	\begin{cases}
	\bar x_{k,i}=A_{k-1}\hat x_{k-1,i},\\
	\phi_{k,i}=\bar x_{k,i}+K_{k,i}(y_{k,i}-H_{k,i}\bar x_{k,i}),\\
	\hat x_{k,i}=\sum_{j\in \mathcal{N}_{i}}W_{k,i,j}\phi_{k,j},\\
	\qquad\text{   s.t. }\sum_{j\in \mathcal{N}_{i}}W_{k,i,j}=I_{n},W_{k,i,j}=0, \text{if } j\notin \mathcal{N}_{i},
	\end{cases}
	\end{equation*}
	where $\bar x_{k,i}$,  $\phi_{k,i}$ and $\hat x_{k,i}$ are the state prediction,  state update and  state estimate of Sensor $i$ at the $k$th moment, respectively.
	$K_{k,i}$ is the filtering gain matrix and $W_{k,i,j}$ is the local fusion matrix.  Additionally, the condition $\sum_{j\in \mathcal{N}_{i}}W_{k,i,j}=I_{n}$ is to guarantee the unbiasedness of the estimates.


	The  design of optimal filtering gain matrix $K_{k,i}^*$ can be achieved through
	\begin{equation*}
	K_{k,i}^*=arg \min_{K_{k,i}} tr(P_{k,i}),
	\end{equation*}
	where $P_{k,i}=E\{(\hat x_{k,i}-x_{k})(\hat x_{k,i}-x_{k})^T\}$.
	Then one can obtain the networked Kalman filter with optimal gain parameter in Table \ref{ODKF1}, where $K_{k,i}$ stands for $K_{k,i}^*$ hereafter for convenience \cite{He16con}. In this algorithm, the error covariance matrices  are derived with
	the forms $ \bar P_{k,i,j}=E\{(\bar x_{k,i}-x_{k})(\bar x_{k,j}-x_{k})^T\}$, $\tilde P_{k,i,j}=E\{(\phi_{k,i}-x_{k})(\phi_{k,j}-x_{k})^T\}$ and $ P_{k,i,j}=E\{(\hat x_{k,i}-x_{k})(\hat x_{k,j}-x_{k})^T\}$. However, since the calculations of $(\bar P_{k,i,j}, \tilde P_{k,i,j}, P_{k,i,j})$ need the global information on $\{K_{k,j},H_{k,j},j\in\mathcal{V}\}$. {\color{black}The algorithm 1 in Table \ref{ODKF1} is actually a centralized filter, which is almost impossible to be conducted in a scalable manner
		for a large network. }

	Since the optimal design of $W_{k,i,j}$ depends on the covariance matrices which rely on global information \cite{He16con}, we will discuss the sub-optimal design for $W_{k,i,j}$ simply with the local information in the following text. {\color{black}Generally, for the design of local fusion weights $W_{k,i,j}$, the traditional methods assume $W_{k,i,j}=\alpha_{i,j}I_{n}$, where $\alpha_{i,j}$ are positive scalars satisfying the required conditions (\cite{Cat2010Diffusion,Wenling2015Diffusion}).} In this paper, $W_{k,i,j}$ are considered as time-varying matrix weights obtained by the CI strategy \cite{Julier1997A}. Hence, we propose a sub-optimal scalable algorithm named as consistent distributed Kalman filter in Table \ref{ODKF2}, which corresponds to the communication topology illustrated in Fig. \ref{topology1}. {\color{black} In the communication process,  only the pair ($\phi_{k,j}$, $\tilde P_{k,j}$ ) is transfered from neighbors. }
	

	\begin{table}
		\caption{Networked Kalman Filter with Optimal Gain \cite{He16con}:}
		\label{ODKF1}
		\begin{tabular}{l}  		
			\hline\hline
			\textbf{Prediction:}\\
			$\bar x_{k,i}=A_{k-1}\hat x_{k-1,i},$\\  
			$\bar P_{k,i}=A_{k-1}P_{k-1,i}A_{k-1}^T+Q_{k-1},$\\         
			$\bar P_{k,i,j}=A_{k-1}P_{k-1,i,j}A_{k-1}^T+Q_{k-1},$\\
			\textbf{Measurement Update:}\\
			$\phi_{k,i}=\bar x_{k,i}+K_{k,i}(y_{k,i}-H_{k,i}\bar x_{k,i})$,\\        
			$K_{k,i}=\bar P_{k,i}H_{k,i}^T(H_{k,i}\bar P_{k,i}H_{k,i}^T+R_{k,i})^{-1}$,\\
			$\tilde P_{k,i}=(I-K_{k,i}H_{k,i})\bar P_{k,i}$,\\
			$\tilde P_{k,j,s}=(I-K_{k,j}H_{k,j})\bar P_{k,j,s}(I-K_{k,s}H_{k,s})^T,j\neq s$,\\
			\textbf{Local Fusion:}\\
			$\hat x_{k,i}=\sum_{j\in \mathcal{N}_{i}}W_{k,i,j}\phi_{k,j}$,	\\
			$P_{k,i}=\sum_{j\in \mathcal{N}_{i}}\sum_{s\in \mathcal{N}_{i}}W_{k,i,j}\tilde P_{k,j,s}W_{k,i,s}^T$,\\
			$P_{k,i,l}=\sum_{j\in \mathcal{N}_{i}}\sum_{s\in \mathcal{N}_{l}}W_{k,i,j}\tilde P_{k,j,s}W_{k,l,s}^T$.\\
			\hline
		\end{tabular}
	\end{table}
	
	\begin{table}
		\caption{Consistent Distributed Kalman Filter:}
		\label{ODKF2}
		\begin{tabular}{l}  
			\hline\hline
			\textbf{Prediction:}\\
			$\bar x_{k,i}=A_{k-1}\hat x_{k-1,i},$\\  
			$\bar P_{k,i}=A_{k-1}P_{k-1,i}A_{k-1}^T+Q_{k-1},$\\         
			\textbf{Measurement Update:}\\
			$\phi_{k,i}=\bar x_{k,i}+K_{k,i}(y_{k,i}-H_{k,i}\bar x_{k,i})$,\\        
			$K_{k,i}=\bar P_{k,i}H_{k,i}^T(H_{k,i}\bar P_{k,i}H_{k,i}^T+R_{k,i})^{-1}$,\\
			$\tilde P_{k,i}=(I-K_{k,i}H_{k,i})\bar P_{k,i}$,\\
			\textbf{Local Fusion:} Receiving ($\phi_{k,j}$, $\tilde P_{k,j}$ ) from neighbors $j\in \mathcal{N}_{i}$\\
			$\hat x_{k,i}=P_{k,i}\sum_{j\in \mathcal{N}_{i}}w_{k,i,j}\tilde P_{k,j}^{-1}\phi_{k,j}$,\\	
			$P_{k,i}=(\sum_{j\in \mathcal{N}_{i}}w_{k,i,j}\tilde P_{k,j}^{-1})^{-1}$,	\\
			Design $w_{k,i,j}(\geq 0) $, such that $\sum_{j\in \mathcal{N}_{i}}w_{k,i,j}=1$.\\	
			\textbf{Initialization:}\\
			$\hat x_{0,i}=0, P_{0,i}\geq P_{0}$. \\
			\hline
		\end{tabular}
	\end{table}


	\begin{figure}[htp]
		\begin{center}
			\begin{tikzpicture}[scale=0.8, transform shape,line width=2pt]
			\node  [circle,shade, fill=gray!30] (a) at (0, 0) {Sensor 1};
			\node  [circle,shade, fill=gray!30] (b) at +(0: 1.5*3) {Sensor 2};
			\node  [circle,shade, fill=gray!30] (c) at +(45: 1.5*2) {Sensor 3};
			\node  [circle,shade, fill=gray!30] (d) at +(180: 2) {$\cdots$};	
			\node  [circle,shade, fill=gray!30] (e) at +(27: 1.5*3.1) {$\cdots$};	
			\node  [circle,shade, fill=gray!30] (f) at +(0: 1.5*4.35) {$\cdots$};
			\node  [circle,shade, fill=gray!30] (g) at +(90: 1.5*1.4) {$\cdots$};			
			\foreach \from/\to in {a/b, b/c, c/a}
			\draw [black!30,->] (\from) -- (\to);
			\draw [black!30,->] (c) -- (e);
			\draw [black!30,->] (d) -- (a);
			\draw [black!30,->] (g) -- (c);
			\draw [black!30,->] (f) -- (b);
			\path (0:1.5) node(text1)  [below]{$(\phi_{k,1} \text{,} \tilde P_{k,1})$};
			\path (17:1.5*2.35) node(text2)  [right]{($\phi_{k,2} \text{,} \tilde P_{k,2})$};
			\path (19:1.8*1.7) node(text3)  [left]{$(\phi_{k,3} \text{,} \tilde P_{k,3})$};
			\end{tikzpicture}
		\end{center}
		\caption{An illustration of the communication topology for consistent distributed Kalman filter}\label{topology1}
	\end{figure}
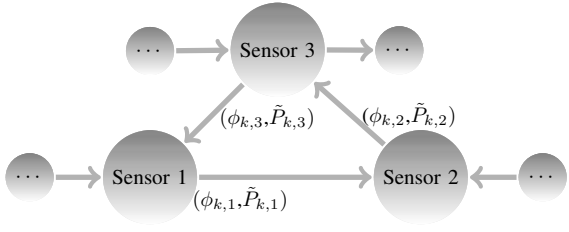
	
	Some remarks on the proposed consistent distributed Kalman filter in Table \ref{ODKF2} are given as follows. Firstly,
	the matrix $P_{k,i}$ in the algorithm  may not stand for the error covariance matrix of Sensor $i$. In the subsequent parts, we will show the relationship between $P_{k,i}$ and the error covariance matrix.  Secondly, the time-varying CI weights $\{w_{k,i,j}\}$ are considered as constant CI weights in \cite{Battistelli2014Kullback, Battistelli2015Consensus, Battistelli2016stability}. This paper will show an adaptive design method with respect to  $\{w_{k,i,j}\}$ through a convex optimization algorithm. Thirdly,
	the proposed algorithm can also be equipped with the consensus (multiple times of local fusion)  with certain steps similar to \cite{Battistelli2014Kullback}.  {\color{black}Fourthly, the computation complexity of the CDKF in Table \ref{ODKF2} for Sensor $i$ is $O(m^3+n^3|\mathcal{N}_{i}|)$ if $\{w_{k,i,j}\}$ are set to be constant, such as $w_{k,i,j}=a_{i,j}$. If we turn to obtain 
		the optimized weights by certain optimization algorithms, 
		the computational complexity of the total algorithm should include the complexity  of the specific optimization method. }

	In the next section, we will give the performance analysis of the proposed consistent distributed Kalman filter.
	\section{Performance Analysis}
	\subsection{ Error Evaluation and Consistency}
	Firstly, the following theorem shows the state estimation error's probability distribution of each sensor.
	\begin{theorem}\label{thm_distri}
		Consider the system (\ref{system_all}) with the CDKF in Table \ref{ODKF2}, then under Assumption \ref{ass_noise} the state estimation error of each sensor is zero-mean and Gaussian,  i.e., the  following equation holds
		\begin{equation}\label{lem_unbias}
		\hat x_{k,i}-x_{k}=e_{k,i}\sim \mathcal{N}(0,E\{e_{k,i}e_{k,i}^T\}), \forall i\in \mathcal{V}, k\geq 0,
		\end{equation}
		where $\mathcal{N}(0,U)$ is the Gaussian distribution with mean $0$ and covariance matrix $U$.
	\end{theorem}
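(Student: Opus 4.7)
The plan is to prove both claims simultaneously by induction on $k$, leaning on two structural observations. First, under Assumption \ref{ass_noise} the collection $\{x_0,\omega_0,\omega_1,\ldots,v_{0,i},v_{1,i},\ldots\}_{i\in\mathcal V}$ is jointly zero-mean Gaussian, so any deterministic affine functional of this collection is again zero-mean Gaussian. Second, every matrix-valued quantity produced by the CDKF in Table \ref{ODKF2}, namely $\bar P_{k,i}$, $K_{k,i}$, $\tilde P_{k,i}$, $P_{k,i}$ and the CI weights $w_{k,i,j}$, is generated by a deterministic recursion whose inputs are the system matrices $A_{k-1}$, $H_{k,i}$, $Q_{k-1}$, $R_{k,i}$ together with the matrices from the previous step. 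In particular, the SDP that yields the adaptive weights $w_{k,i,j}$ takes only the $\tilde P_{k,j}$'s as input, and therefore cannot depend on the random measurements $y_{k,\cdot}$. Consequently each $\hat x_{k,i}$ will be an affine function of the Gaussian noise history with nonrandom coefficients.

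The base case of the induction is immediate from the initialization $\hat x_{0,i}=0$: the error $e_{0,i}=-x_0$ is Gaussian with mean zero and covariance $P_0$ by Assumption \ref{ass_noise}. For the inductive step, assume $e_{k-1,j}$ is zero-mean Gaussian for every $j$. Substituting the dynamics and the filter equations, the prediction error reads
\begin{equation*}
\bar x_{k,i}-x_k=A_{k-1}e_{k-1,i}-\omega_{k-1},
\end{equation*}
the updated error reads
\begin{equation*}
\phi_{k,i}-x_k=(I-K_{k,i}H_{k,i})(\bar x_{k,i}-x_k)+K_{k,i}v_{k,i},
\end{equation*}
and, using the identity $P_{k,i}^{-1}=\sum_{j\in\mathcal N_i}w_{k,i,j}\tilde P_{k,j}^{-1}$ from Table \ref{ODKF2} to absorb $-x_k$ into the convex combination, the fused error becomes
\begin{equation*}
e_{k,i}=P_{k,i}\sum_{j\in\mathcal N_i}w_{k,i,j}\tilde P_{k,j}^{-1}(\phi_{k,j}-x_k).
\end{equation*}
At each of these three steps the operation is affine with deterministic coefficient matrices, so zero-mean and Gaussianity are preserved; this closes the induction and establishes (\ref{lem_unbias}).

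The only non-routine point, and the place that deserves a careful sentence in the formal write-up, is the claim that the adaptive CI weights $w_{k,i,j}$ are deterministic. This is where one must invoke the structure of the optimization procedure described after Table \ref{ODKF2}: because $\bar P_{k,i}$, $K_{k,i}$ and $\tilde P_{k,i}$ propagate without any injection of measurement noise, the SDP whose decision variables are the $w_{k,i,j}$ has deterministic data, and hence its minimizer is deterministic. Once this is acknowledged, the theorem reduces to the textbook fact that affine images of jointly Gaussian vectors are Gaussian, so no delicate probabilistic argument is required.
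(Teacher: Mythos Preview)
Your proof is correct and follows essentially the same route as the paper: derive the error recursions $\bar e_{k,i}=A_{k-1}e_{k-1,i}-\omega_{k-1}$, $\tilde e_{k,i}=(I-K_{k,i}H_{k,i})\bar e_{k,i}+K_{k,i}v_{k,i}$, and $e_{k,i}=P_{k,i}\sum_{j\in\mathcal N_i}w_{k,i,j}\tilde P_{k,j}^{-1}\tilde e_{k,j}$, then argue by induction that Gaussianity and zero mean propagate. Your explicit justification that the adaptive CI weights $w_{k,i,j}$ are deterministic (because the SDP data depend only on the deterministically propagated $\tilde P_{k,j}$) is a point the paper's proof takes for granted, so your write-up is in fact slightly more careful on this issue.
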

	\begin{proof}
		See Appendix \ref{pf_dis}.
	\end{proof}
	The Gaussianity and unbiasedness of estimation error in Theorem \ref{thm_distri} provide an effective evaluation method for the system state, if we can obtain the estimation error covariance matrix $E\{e_{k,i}e_{k,i}^T\}$.
	In the Kalman filter, the error covariance can be represented by the parameter $P_{k}$. However, in the distributed Kalman filters \cite{Das2015Distributed,olfati2007distributed,Cat2010Diffusion}, the relationship between $P_{k,i}$ and error covariance matrix is uncertain.
	For the sake of evaluating the estimation error of CDKF, their relationship will be analyzed from the aspect of consistency defined as follows.
	\begin{definition}(\cite{Julier1997A})
		Suppose $x_{k}$ is a random vector. Let $\hat x_{k}$ and $P_{k}$ be the estimate of $x_{k}$ and the estimate of the corresponding error covariance matrix. Then the pair ($\hat x_{k},P_{k}$) is said to be consistent (or of consistency) at the $k$th moment if
		\begin{equation*}
		E\{(\hat x_{k}-x_{k})(\hat x_{k}- x_{k})^T\}\leq P_{k}.
		\end{equation*}
	\end{definition}
	The following theorem shows the consistency of CDKF, which directly depicts the relationship between the estimation error covariance matrix  $E\{e_{k,i}e_{k,i}^T\}$ and the parameter matrix $P_{k,i}$.
	\begin{theorem}\label{theorem_consistent}
		Considering the system (\ref{system_all}),  {\color{black}under Assumption \ref{ass_noise}, }the pair ($\hat x_{k,i},P_{k,i}$) of the CDKF in Table \ref{ODKF2} is consistent, i.e.,
		\begin{equation}\label{P_consistent}
		E\{(\hat x_{k,i}-x_{k})(\hat x_{k,i}-x_{k})^T\}\leq P_{k,i}, \forall i\in \mathcal{V}, k\geq 0.
		\end{equation}
	\end{theorem}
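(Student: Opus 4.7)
The plan is to prove consistency by induction on the time index $k$, exploiting the three natural stages of the CDKF — prediction, measurement update, and CI fusion — in sequence. At each stage I verify that the claimed covariance parameter dominates the true error covariance, so that the consistency propagates from one step to the next.

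Base case ($k=0$). By the initialization, $\hat x_{0,i}=0$, so $E\{(\hat x_{0,i}-x_0)(\hat x_{0,i}-x_0)^T\}=E\{x_0 x_0^T\}=P_0\leq P_{0,i}$, which gives consistency at $k=0$.

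Inductive step. Assume $E\{(\hat x_{k-1,i}-x_{k-1})(\hat x_{k-1,i}-x_{k-1})^T\}\leq P_{k-1,i}$ for every $i\in\mathcal V$. First, the prediction error satisfies $\bar x_{k,i}-x_k=A_{k-1}(\hat x_{k-1,i}-x_{k-1})-\omega_{k-1}$, and since the state-estimation error and the process noise $\omega_{k-1}$ are uncorrelated (by Assumption \ref{ass_noise} and a standard measurability argument), we get
\begin{equation*}
E\{(\bar x_{k,i}-x_k)(\bar x_{k,i}-x_k)^T\}=A_{k-1}E\{(\hat x_{k-1,i}-x_{k-1})(\hat x_{k-1,i}-x_{k-1})^T\}A_{k-1}^T+Q_{k-1}\leq \bar P_{k,i}.
\end{equation*}
Hence $(\bar x_{k,i},\bar P_{k,i})$ is consistent. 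Next, for the measurement update, $\phi_{k,i}-x_k=(I-K_{k,i}H_{k,i})(\bar x_{k,i}-x_k)+K_{k,i}v_{k,i}$, and using independence of $v_{k,i}$ from $\bar x_{k,i}-x_k$ together with the identity (standard for the Kalman gain $K_{k,i}=\bar P_{k,i}H_{k,i}^T(H_{k,i}\bar P_{k,i}H_{k,i}^T+R_{k,i})^{-1}$)
\begin{equation*}
(I-K_{k,i}H_{k,i})\bar P_{k,i}(I-K_{k,i}H_{k,i})^T+K_{k,i}R_{k,i}K_{k,i}^T=(I-K_{k,i}H_{k,i})\bar P_{k,i}=\tilde P_{k,i},
\end{equation*}
together with the inductive bound on $\bar P_{k,i}$, one obtains $E\{(\phi_{k,i}-x_k)(\phi_{k,i}-x_k)^T\}\leq \tilde P_{k,i}$. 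Thus every local updated pair $(\phi_{k,j},\tilde P_{k,j})$ is consistent.

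Finally, the local fusion is exactly the covariance-intersection rule with weights $w_{k,i,j}\geq 0$, $\sum_{j\in\mathcal N_i}w_{k,i,j}=1$, applied to the consistent estimators $\{(\phi_{k,j},\tilde P_{k,j})\}_{j\in\mathcal N_i}$. Invoking the well-known CI property \cite{Julier1997A} — that the fusion output $\hat x_{k,i}=P_{k,i}\sum_{j\in\mathcal N_i}w_{k,i,j}\tilde P_{k,j}^{-1}\phi_{k,j}$ with $P_{k,i}=(\sum_{j\in\mathcal N_i}w_{k,i,j}\tilde P_{k,j}^{-1})^{-1}$ remains consistent regardless of the (unknown) cross-correlations between the $\phi_{k,j}$ — yields $E\{(\hat x_{k,i}-x_k)(\hat x_{k,i}-x_k)^T\}\leq P_{k,i}$, closing the induction.

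The main obstacle is the CI fusion step, because the cross-covariances between the neighbours' updates $\phi_{k,j}$ are not computed anywhere in the CDKF. I would therefore state and use the generic CI lemma as the crux: for arbitrary jointly distributed consistent pairs, the CI-combined pair is again consistent, the proof relying on expressing $\sum_{j,l\in\mathcal N_i}w_{k,i,j}w_{k,i,l}P_{k,i}\tilde P_{k,j}^{-1}E\{(\phi_{k,j}-x_k)(\phi_{k,l}-x_k)^T\}\tilde P_{k,l}^{-1}P_{k,i}$ and bounding it by $P_{k,i}(\sum_j w_{k,i,j}\tilde P_{k,j}^{-1})P_{k,i}=P_{k,i}$ via a Schur-complement/Cauchy–Schwarz argument that exploits only the diagonal bounds $E\{(\phi_{k,j}-x_k)(\phi_{k,j}-x_k)^T\}\leq\tilde P_{k,j}$. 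All other steps are routine propagation arguments, so most of the care in the write-up will go into cleanly citing or reproducing the CI consistency inequality.
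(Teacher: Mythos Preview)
Your proposal is correct and follows essentially the same inductive three-stage (prediction, update, CI fusion) argument that the paper uses, with the fusion step handled by invoking the covariance-intersection consistency property just as the paper does (the paper cites \cite{Niehsen2002Information} rather than \cite{Julier1997A}). Your treatment of the measurement-update step via the Joseph-form identity is actually more explicit than the paper's, which simply asserts $E\{\tilde e_{k,i}\tilde e_{k,i}^T\}\leq \tilde P_{k,i}$ without spelling out that identity.
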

	\begin{proof}
		Here we utilize a inductive method to finish the proof of this theorem.	
		Firstly, under the initial condition, due to $x_{0}\sim \mathcal{N}(0,P_{0})$, there is $E\{(\hat x_{0,i}-x_{0})(\hat x_{0,i}-x_{0})^T\} \leq P_{0,i}$.
		It is supposed that, at the $(k-1)$th moment, $E\{(\hat x_{k-1,i}-x_{k-1})(\hat x_{k-1,i}-x_{k-1})^T\}=E\{e_{k-1,i}e_{k-1,i}^T\} \leq P_{k-1,i}.$
		The equation (\ref{e_bar1}) provides the prediction error at the $k$th moment with the form
		$\bar e_{k,i}=A_{k-1}e_{k-1,i}-\omega_{k-1}.$
		Due to $E\{e_{k-1,i}\omega_{k-1}^T\}=0$, it is immediate to  see that
		$E\{\bar e_{k,i}\bar e_{k,i}^T\}=A_{k-1}E\{e_{k-1,i}e_{k-1,i}^T\}A_{k-1}^T+Q_{k-1}.$
		Thus, $E\{\bar e_{k,i}\bar e_{k,i}^T\}\leq A_{k-1}P_{k-1,i}A_{k-1}^T+Q_{k-1}=\bar P_{k,i}.$
		In the update process, according to  (\ref{e_k1}), there is
		$\tilde e_{k,i}=(I-K_{k,i}H_{k,i})\bar e_{k,i}+K_{k,i}v_{k,i}.$
		Because of $E\{\bar e_{k,i}v_{k,i}^T\}=0$, we can obtain $E\{\tilde e_{k,i}\tilde e_{k,i}^T\}\leq \tilde P_{k,i}.$
		Notice that $e_{k,i}=\hat x_{k,i}-x_{k}=P_{k,i}\sum_{j\in \mathcal{N}_{i}}w_{k,i,j}\tilde P_{k,j}^{-1}\tilde e_{k,j},$
		then we get
		\begin{equation}\label{e_k010}
		\begin{split}
		E\{e_{k,i}e_{k,i}^T\}=&P_{k,i}E\{\Delta_{k,i}\} P_{k,i}.
		\end{split}
		\end{equation}
		where \\
		$\Delta_{k,i}=\big(\sum_{j\in \mathcal{N}_{i}}w_{k,i,j}\tilde P_{k,j}^{-1}\tilde e_{k,j}\big)\big(\sum_{j\in \mathcal{N}_{i}}w_{k,i,j}\tilde P_{k,j}^{-1}\tilde e_{k,j}\big)^T$.
		
		According to the consistent estimate of covariance intersection \cite{Niehsen2002Information}, we find that
		\begin{equation}\label{e_k0101}
		E\{\Delta_{k,i}\} \leq \sum_{j\in \mathcal{N}_{i}}w_{k,i,j}\tilde P_{k,j}^{-1}E\{\tilde e_{k,j}\tilde e_{k,j}^T\}\tilde P_{k,j}^{-1}
		\leq P_{k,i}^{-1}.
		\end{equation}
		Hence, the combination of  (\ref{e_k010}) and  (\ref{e_k0101}) leads to (\ref{P_consistent}). \textbf{Q.E.D.}
	\end{proof}
	\begin{remark}
		Theorem \ref{theorem_consistent} essentially states that the estimation error covariance matrix can be upper bounded by the parameter $P_{k,i}$ provided by the filter. {\color{black}With this property, one  can not only  evaluate the estimation error in real time under the error distribution illustrated in Theorem \ref{thm_distri}, but also  judge the boundedness of covariance matrix through $P_{k,i}$.}

	\end{remark}
	\subsection{Design of Adaptive CI Weights}
	Since the matrix $P_{k,i}$ is the upper bound of the covariance matrix of the unavailable estimation error, we seek to compress $P_{k,i}$ so as to lower the estimation error.
	Fortunately, the proper design of $w_{k,i,j}$ is helpful to achieve the compression on $P_{k,i}$.
	%
	%
	%
	{\color{black}
		Since $P_{k,i}=(\sum_{j\in \mathcal{N}_{i}}w_{k,i,j}\tilde P_{k,j}^{-1})^{-1}$, 
		we aim to obtain a smaller $P_{k,i}$ than $P_{k,i}$ calculated with constant weights $a_{i,j}$, i.e., 
		\begin{align}\label{optim_0}
		\Delta_{k,i}=\sum_{j\in \mathcal{N}_{i}}(w_{k,i,j}-a_{i,j})\tilde P_{k,j}^{-1}>0.
		\end{align}	
		Under the condition $\Delta_{k,i}>0$, we consider the 
		following optimization problem:
		\begin{align}\label{optim_1}
		\{w_{k,i,j},j\in \mathcal{N}_{i}\}=arg\min_{w_{k,i,j}} tr(\Delta_{k,i}^{-1}),
		\end{align}
		where $\sum_{j\in \mathcal{N}_{i}} w_{k,i,j}=1,0\leq  w_{k,i,j}\leq 1,j\in \mathcal{N}_{i}.$
		
		The reason that we choose (\ref{optim_1}) as the objective function lies in the fact that when $tr(\Delta_{k,i}^{-1})$ decreases, $tr(\Delta_{k,i})$ increases. Under the condition $\Delta_{k,i}>0$, we can achieve a compression of $P_{k,i}$ at each channel compared with constant weights $a_{i,j}$.
		To solve the problem (\ref{optim_1}), it is equivalently transfered to another form given in Lemma \ref{lemma_1}.
		
		
		\begin{lemma}\label{Schur_Complement}(Schur Complement \cite{Seber2007A})
			The following linear matrix inequality (LMI):
			\begin{equation*}
			\left(
			\begin{array}{cc}
			Q(x) & S(x) \\
			S(x)^T & R(x) \\
			\end{array}
			\right)>0,
			\end{equation*}
			where $Q(x)=Q(x)^T$ and $R(x)=R(x)^T$, is equivalent to the following condition:
			\begin{align*}
			Q(x)>0,\quad  R(x)-S(x)^TQ(x)^{-1}S(x)>0.
			\end{align*}
		\end{lemma}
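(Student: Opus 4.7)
The plan is to prove the equivalence by exhibiting an explicit congruence transformation that block-diagonalizes the $2\times 2$ block matrix, then reading off positive definiteness from the diagonal blocks. Throughout, I will use the standard fact that for any invertible matrix $T$, a symmetric matrix $M$ satisfies $M > 0$ if and only if $T^T M T > 0$, so congruence preserves positive definiteness.

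First, I would argue that in both directions of the equivalence we may assume $Q(x) > 0$ (and in particular that $Q(x)^{-1}$ exists). The backward direction assumes this outright. For the forward direction, if the block matrix is positive definite, then testing it against vectors of the form $(u^T, 0)^T$ with $u \neq 0$ yields $u^T Q(x) u > 0$, forcing $Q(x) > 0$.

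Second, I would introduce the block lower triangular matrix
\begin{equation*}
T = \begin{pmatrix} I & -Q(x)^{-1} S(x) \\ 0 & I \end{pmatrix},
\end{equation*}
which is invertible with $\det T = 1$. A direct block multiplication then gives
\begin{equation*}
T^T \begin{pmatrix} Q(x) & S(x) \\ S(x)^T & R(x) \end{pmatrix} T = \begin{pmatrix} Q(x) & 0 \\ 0 & R(x) - S(x)^T Q(x)^{-1} S(x) \end{pmatrix}.
\end{equation*}
Since congruence preserves positive definiteness, the block matrix on the left is positive definite if and only if the block-diagonal matrix on the right is. A block-diagonal symmetric matrix is positive definite if and only if each diagonal block is, yielding the stated equivalence.

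I do not expect a real obstacle here, since the lemma is classical and the congruence transformation does all the work; the only care needed is justifying $Q(x) > 0$ on the forward direction before inverting it, which is handled by the one-line restriction-to-subspace argument above. An alternative I would keep in reserve, in case a reader prefers an algebraic route, is to complete the square directly: for any vector $(u^T, v^T)^T$, one checks
\begin{equation*}
\begin{pmatrix} u \\ v \end{pmatrix}^T \begin{pmatrix} Q(x) & S(x) \\ S(x)^T & R(x) \end{pmatrix} \begin{pmatrix} u \\ v \end{pmatrix} = (u + Q(x)^{-1} S(x) v)^T Q(x) (u + Q(x)^{-1} S(x) v) + v^T \bigl( R(x) - S(x)^T Q(x)^{-1} S(x) \bigr) v,
\end{equation*}
from which the equivalence is immediate by choosing $v = 0$ or $u = -Q(x)^{-1} S(x) v$.
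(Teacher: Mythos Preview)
Your proof is correct and is the standard textbook argument for the Schur complement lemma. The paper itself does not prove this statement: it merely states the lemma and cites \cite{Seber2007A} as the source, so there is no in-paper proof to compare against. Your congruence-transformation approach (and the alternative completion-of-squares) is exactly the kind of argument one finds in standard references such as the one cited.
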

		
		\begin{lemma}\label{lemma_1}
			The problem (\ref{optim_1}) is equivalent to the following convex optimization problem
			\begin{align}\label{objective_weight}
			\{w_{k,i,j},m_{k,i_l}\}=arg\min_{w_{k,i,j},m_{k,i_l}} tr(M_{k,i}),
			\end{align}
			subject to
			\begin{align}\label{eq_constraint1}
			&\left(
			\begin{array}{cc}
			\Delta_{k,i} & I_{n} \\
			I_{n} & M_{k,i} \\
			\end{array}
			\right)> 0,
			\end{align}
			where $M_{k,i}=diag\{m_{k,i_1},m_{k,i_2},\cdots,m_{k,i_n}\}> 0$, $\sum_{j\in \mathcal{N}_{i}} w_{k,i,j}=1,0\leq  w_{k,i,j}\leq 1,j\in \mathcal{N}_{i}.$
		\end{lemma}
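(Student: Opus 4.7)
The approach is a direct application of the Schur complement (Lemma \ref{Schur_Complement}) to the LMI (\ref{eq_constraint1}). Identifying $Q(x)=\Delta_{k,i}$, $S(x)=I_n$, and $R(x)=M_{k,i}$, condition (\ref{eq_constraint1}) is equivalent to $\Delta_{k,i}>0$ together with $M_{k,i}-\Delta_{k,i}^{-1}>0$. The first condition recovers the feasibility requirement (\ref{optim_0}) implicit in (\ref{optim_1}), while the second links the auxiliary variable $M_{k,i}$ to the matrix $\Delta_{k,i}^{-1}$ whose trace is the original objective.

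Next I would show that, for any fixed feasible $\{w_{k,i,j}\}$, the inner minimization of $tr(M_{k,i})=\sum_{l=1}^n m_{k,i_l}$ over diagonal $M_{k,i}$ satisfying $M_{k,i}>\Delta_{k,i}^{-1}$ drives the objective arbitrarily close to $tr(\Delta_{k,i}^{-1})$ from above. Consequently, the outer minimization in (\ref{objective_weight}) selects the same optimal weights as (\ref{optim_1}). The ingredient here is that $M_{k,i}>\Delta_{k,i}^{-1}$ forces each diagonal entry $m_{k,i_l}$ to strictly exceed the corresponding diagonal entry of $\Delta_{k,i}^{-1}$, while the slack can be compressed in the limit.

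Convexity of (\ref{objective_weight})--(\ref{eq_constraint1}) then follows immediately: the objective is linear in $\{m_{k,i_l}\}$; by (\ref{optim_0}), $\Delta_{k,i}$ is affine in $\{w_{k,i,j}\}$; the block matrix in (\ref{eq_constraint1}) is therefore affine in all decision variables; and the simplex constraints $\sum_{j\in\mathcal{N}_i}w_{k,i,j}=1$ with $0\le w_{k,i,j}\le 1$ define a polytope. This certifies the reformulated problem as a standard SDP, justifying the terminology in the statement.

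The main obstacle I anticipate lies in handling the diagonal parametrization of $M_{k,i}$. If $M_{k,i}$ were allowed to be a general symmetric matrix, the inner infimum of $tr(M_{k,i})$ subject to $M_{k,i}\ge \Delta_{k,i}^{-1}$ would be exactly $tr(\Delta_{k,i}^{-1})$, attained at $M_{k,i}=\Delta_{k,i}^{-1}$, and the equivalence with (\ref{optim_1}) would be automatic. Under the diagonal restriction adopted here, the infimum can strictly exceed $tr(\Delta_{k,i}^{-1})$ whenever $\Delta_{k,i}^{-1}$ has nonzero off-diagonal entries, so the delicate step is to verify that the diagonal surrogate nonetheless preserves the \emph{minimizer} in $\{w_{k,i,j}\}$, i.e.\ that the outer optimum of (\ref{objective_weight}) coincides with that of (\ref{optim_1}).
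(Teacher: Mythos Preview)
Your Schur-complement reduction is exactly the paper's argument: the paper applies Lemma~\ref{Schur_Complement} to (\ref{eq_constraint1}), obtains $\Delta_{k,i}>0$ and $M_{k,i}>\Delta_{k,i}^{-1}$, deduces $tr(M_{k,i})>tr(\Delta_{k,i}^{-1})$, and then asserts without further detail that minimizing $tr(M_{k,i})$ therefore minimizes $tr(\Delta_{k,i}^{-1})$ and vice versa. The convexity discussion you add is not part of the paper's proof of this lemma (it is deferred to Lemma~\ref{lemma_2}), but is harmless.

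The obstacle you flag about the diagonal parametrization of $M_{k,i}$ is genuine, and the paper's proof does not resolve it either. Your second-paragraph claim that the inner infimum over diagonal $M_{k,i}$ equals $tr(\Delta_{k,i}^{-1})$ is false whenever $\Delta_{k,i}^{-1}$ has nonzero off-diagonal entries (as you yourself observe in the final paragraph): for instance, if $\Delta_{k,i}^{-1}=\begin{pmatrix}1&1/2\\1/2&1\end{pmatrix}$, the smallest trace of a diagonal matrix strictly dominating it is $3$, not $2$. Hence the inner value function in $w$ for (\ref{objective_weight}) differs from $tr(\Delta_{k,i}^{-1}(w))$, and the two problems need not share the same minimizer in $\{w_{k,i,j}\}$. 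You have reproduced the paper's reasoning faithfully and correctly identified the gap it leaves open; closing it would require either relaxing $M_{k,i}$ to a full symmetric matrix (which keeps the SDP structure and makes the equivalence exact) or a separate argument tied to the specific structure of $\Delta_{k,i}$, neither of which appears in the paper.
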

		\begin{proof}
			See Appendix \ref{pf_lemma1}
		\end{proof}
		

		%
		%
		%
		
		\begin{lemma}\label{lemma_2}
			The problem (\ref{objective_weight}) is a convex optimization problem, which can be solved through  the algorithm of Table \ref{SDP}\footnote{The subscript $j$ of the non-zero parameters ($ w_{k,i,j}, a_{i,j}, P_{k,j}$) is set from $1$ to $|\mathcal{N}_{i}|$ without loss of generality.} based on the popular SDP algorithm \cite{boyd2004convex}.	
		\end{lemma}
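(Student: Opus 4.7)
The plan is to verify three things: that the objective in (\ref{objective_weight}) is linear in the decision variables, that the feasible set is convex, and that the resulting program fits the canonical form of a semi-definite program so that the routine referenced in Table \ref{SDP} applies directly.

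First, I would fix the time index $k$ and sensor $i$, and emphasize that the quantities $\tilde P_{k,j}^{-1}$ (received from neighbors during the local fusion step) together with the fixed adjacency weights $a_{i,j}$ are known numerical constants at the instant the weights $\{w_{k,i,j}\}$ are optimized. Consequently
\begin{equation*}
\Delta_{k,i}=\sum_{j\in \mathcal{N}_{i}}(w_{k,i,j}-a_{i,j})\tilde P_{k,j}^{-1}
\end{equation*}
is an affine function of $\{w_{k,i,j}\}$, and $M_{k,i}=\mathrm{diag}\{m_{k,i_1},\dots,m_{k,i_n}\}$ is an affine function of $\{m_{k,i_l}\}$. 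Hence the objective $tr(M_{k,i})=\sum_{l=1}^{n}m_{k,i_l}$ is linear in the decision variables.

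Second, I would establish convexity of the feasible set. The constraints $\sum_{j\in\mathcal{N}_{i}}w_{k,i,j}=1$ and $0\le w_{k,i,j}\le 1$ are polyhedral. The block LMI in (\ref{eq_constraint1}) is the pre-image of the open convex cone of positive definite matrices under the affine map $(w,m)\mapsto\bigl(\begin{smallmatrix}\Delta_{k,i} & I_{n}\\ I_{n} & M_{k,i}\end{smallmatrix}\bigr)$, and is therefore convex (every affine pre-image of a convex set is convex). The intersection of these convex constraints is again convex.

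Third, since a linear objective is minimized over a convex feasible set described entirely by one LMI and affine constraints, the problem is in the canonical form of an SDP \cite{boyd2004convex} and, in particular, convex. One can therefore apply any standard interior-point SDP solver; the routine in Table \ref{SDP} is precisely such an instantiation specialized to the locally available data $\{\tilde P_{k,j}^{-1},a_{i,j}\}_{j\in\mathcal{N}_{i}}$. There is no substantive obstacle in the argument: the only point that requires care is making explicit that $\tilde P_{k,j}^{-1}$ enters as fixed data rather than as a decision variable, because this is what guarantees the affineness of $\Delta_{k,i}$ and hence the convexity of the LMI constraint.
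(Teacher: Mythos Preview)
Your argument is correct: the objective is linear in $(w_{k,i,j},m_{k,i_l})$, the LMI (\ref{eq_constraint1}) is an affine pre-image of the positive-definite cone, and the remaining constraints are polyhedral, so the problem is a standard SDP and hence convex.

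The paper's own proof takes a more constructive and less conceptual route. Rather than arguing convexity abstractly, it \emph{verifies} that the specific encoding displayed in Table~\ref{SDP} --- the choice of the stacked variable $z$, the matrices $F_0,\dots,F_{|\mathcal{N}_i|+n}$, and the linear constraints $A_1z=b_1$, $A_2z\ge b_2$ --- reproduces exactly the objective (\ref{objective_weight}), the LMI (\ref{eq_constraint1}), and the simplex constraints on $\{w_{k,i,j}\}$. Convexity is then inherited from the fact that this is already in canonical SDP form. Your approach buys a cleaner conceptual justification of convexity; the paper's approach buys an explicit check that the concrete data passed to the solver in Table~\ref{SDP} is faithful to the original problem. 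Your last paragraph asserts that ``the routine in Table~\ref{SDP} is precisely such an instantiation'' without actually checking the correspondence; if you want to match the lemma's second claim (``can be solved through the algorithm of Table~\ref{SDP}'') at the same level of detail as the paper, you should spell out that $c^Tz=\sum_l m_{k,i_l}$, that $F_0+\sum_s z_sF_s$ reproduces the block matrix in (\ref{eq_constraint1}), and that $A_1z=b_1$, $A_2z\ge b_2$ encode $\sum_j w_{k,i,j}=1$, $0\le w_{k,i,j}\le 1$, and $m_{k,i_l}\ge 0$.
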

		\begin{proof}
			See Appendix \ref{pf_lemma2}.
		\end{proof}	
	}
	
	\begin{table}[htp]
		\caption{SDP algorithm for the solution of (\ref{objective_weight})}
		\label{SDP}
		\begin{tabular}{l}  
			\hline\hline
			\textbf{Input:    } $ a_{i,j},\tilde P_{k,j},j=1,\cdots,|\mathcal{N}_{i}|,$ \\
			\textbf{Output:} $w_{k,i,j},$ obtained by solving the following problem \\
			1) \textbf{SDP optimization:} \\
			$\quad \quad\quad\quad\quad\quad \quad \quad z^*=\arg\min c^Tz$, \\
			subject to 		$A_{1}z=b_{1}$, $A_{2}z\geq b_{2}$,\\ $F_{0}+z_{1}F_{1}+\cdots+z_{|\mathcal{N}_{i}|+n}F_{|\mathcal{N}_{i}|+n}>0$,\\
			where 
			$b_{1}\in \mathbb{R},$ $z,c $, $A_1\in \mathbb{R}^{|\mathcal{N}_{i}|+n,1}$, $A_2\in \mathbb{R}^{2|\mathcal{N}_{i}|+n,|\mathcal{N}_{i}|+n}
			$,\\ 
			$b_2\in \mathbb{R}^{2|\mathcal{N}_{i}|+n,1}$,
			$ F_s\in \mathbb{R}^{2n,2n},s=0,\cdots,|\mathcal{N}_{i}|+n$,\\
			with the following forms\\
			$z=\left(	w_{k,i,1}-a_{i,1} \cdots w_{k,i,|\mathcal{N}_{i}|}-a_{i,|\mathcal{N}_{i}|}\quad m_{k,i_n} \cdots m_{k,i_n}\right)^T$,\\
			$c=\left(
			\begin{array}{cccccc}
			0 & \cdots & 0 & 1 & \cdots & 1 \\
			\end{array}
			\right)^T$,$A_{1}=\left(
			\begin{array}{cccccc}
			1 & \cdots & 1 & 0 & \cdots & 0   \\
			\end{array}
			\right),$\\		
			$ b_2=\left(
			\begin{array}{cccccc}
			-a_{i,1} & a_{i,1}-1 &\cdots & -a_{i,|N_{i}|} & 1-a_{i,|\mathcal{N}_{i}|} & 0^{1\times n} \\
			\end{array}
			\right)^T,$\\
			$b_1=0$, $A_2=\left(
			\begin{array}{cc}
			I^{|\mathcal{N}_{i}|\times|\mathcal{N}_{i}|}\otimes (1\quad -1)^T
			& 0 \\
			0 & I^{n\times n} \\
			\end{array}
			\right),$\\
			$F_s=\left(
			\begin{array}{cc}
			\tilde P_{k,s}^{-1} & 0 \\
			0 & 0^{n\times n} \\
			\end{array}
			\right),s\leq|\mathcal{N}_{i}|,$ $F_0=\left(
			\begin{array}{cc}
			0 & I^{n\times n} \\
			I^{n\times n} & 0 \\
			\end{array}
			\right),$\\		
			$F_s=diag\{\underbrace{0,\cdots,0}_{n+s-|\mathcal{N}_{i}|-1},1,0,\cdots,0\},
			|\mathcal{N}_{i}|< s\leq |\mathcal{N}_{i}|+n$.\\
			2) \textbf{Adaptive CI weights:} \\
			$\quad\quad w_{k,i,j}=z^*_j+a_{i,j},j=1,\cdots,|\mathcal{N}_{i}|.$\\\hline
			
		\end{tabular}
	\end{table}

	\begin{remark}
		The result of the SDP algorithm in Table \ref{SDP} may not be feasible due to the constraint of LMI (\ref{eq_constraint1}). 
		If the feasibility of the SDP algorithm is not satisfied, $\{w_{k,i,j}\}$ will keep the setting $\{w_{k,i,j}=a_{i,j},j\in\mathcal{N}_i\}$. Thus, the proposed algorithm always works no matter the SDP algorithm is feasible or not.
	\end{remark}
	{\color{black}
		\begin{remark}
			Generally, the state dimension is  not  high in practical applications, thus the SDP optimization of Table \ref{SDP} can be well handled  with the existing optimization algorithms, such as interior point methods, first-order methods and Bundle methods.  
			The total computational complexity of the proposed algorithm consists of the filtering method and the chosen optimization method. 
		\end{remark}
	}
	By utilizing the algorithm of Table \ref{SDP} to solve the optimization problem  (\ref{objective_weight}), one can obtain a set of adaptive CI weights at each moment, which gives rise to effective compression on the error covariance bound.
	Regarding the comparison of the error covariance bound between the adaptive CI weights and the constant CI weights, the following theorem gives a direct conclusion.
	\begin{theorem}\label{thm_compare_P}
		Considering the system (\ref{system_all}) with the CDKF in Table \ref{ODKF2}, under Assumption \ref{ass_noise} and the same initial conditions,
		there is
		\begin{equation*}
		P_{k,i|w}\leq P_{k,i|a},\quad\forall i\in\mathcal{V},\forall k\geq 0,
		\end{equation*}
		where the parameter matrix $P_{k,i|w}$ and $P_{k,i|a}$ correspond to $w_{k,i,j}$ and  $a_{i,j}$, respectively.
	\end{theorem}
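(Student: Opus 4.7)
The plan is to induct on $k$ on the statement $P_{k,i|w}\leq P_{k,i|a}$ for every $i\in\mathcal{V}$, propagating the dominance through the three stages of the CDKF recursion in Table~\ref{ODKF2}. The base case $k=0$ is immediate since the two runs share the same initial condition, so $P_{0,i|w}=P_{0,i|a}$.

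For the inductive step, assume $P_{k-1,i|w}\leq P_{k-1,i|a}$ for all $i$. First I would push this through the prediction step, exploiting that $P\mapsto A_{k-1}PA_{k-1}^T+Q_{k-1}$ is monotone in the positive semidefinite order; this yields $\bar P_{k,i|w}\leq\bar P_{k,i|a}$. Next I would rewrite the measurement update in information form via the Woodbury identity,
\begin{equation*}
\tilde P_{k,i}^{-1}=\bar P_{k,i}^{-1}+H_{k,i}^T R_{k,i}^{-1}H_{k,i},
\end{equation*}
which is well defined because $\bar P_{k,i}\geq Q_{k-1}\geq\bar Q_1>0$ under Assumption~\ref{ass_noise}. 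Order reversal under inversion, addition of a common term, and inversion again give $\tilde P_{k,j|w}\leq\tilde P_{k,j|a}$, equivalently $\tilde P_{k,j|w}^{-1}\geq\tilde P_{k,j|a}^{-1}$ for every $j$.

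The crucial step is the fusion. The subtle point is that the adaptive weights $w_{k,i,j}$ on the $w$-trajectory are produced by the SDP of Table~\ref{SDP} using the $w$-trajectory's own $\tilde P_{k,j|w}$, so the feasibility certificate $\Delta_{k,i}=\sum_j(w_{k,i,j}-a_{i,j})\tilde P_{k,j|w}^{-1}\geq 0$ from (\ref{optim_0}) is stated in terms of $\tilde P_{k,j|w}$ rather than $\tilde P_{k,j|a}$. I would therefore split the comparison through an intermediate sum:
\begin{equation*}
P_{k,i|w}^{-1}=\sum_{j\in\mathcal{N}_i}w_{k,i,j}\tilde P_{k,j|w}^{-1}=\sum_{j\in\mathcal{N}_i}a_{i,j}\tilde P_{k,j|w}^{-1}+\Delta_{k,i}\geq\sum_{j\in\mathcal{N}_i}a_{i,j}\tilde P_{k,j|w}^{-1}\geq\sum_{j\in\mathcal{N}_i}a_{i,j}\tilde P_{k,j|a}^{-1}=P_{k,i|a}^{-1}.
\end{equation*}
The first inequality uses $\Delta_{k,i}\geq 0$ from SDP feasibility (or equality in the fallback case $w_{k,i,j}=a_{i,j}$, where the conclusion is trivial); the second uses the measurement-update bound $\tilde P_{k,j|w}^{-1}\geq\tilde P_{k,j|a}^{-1}$ established above, weighted by the nonnegative scalars $a_{i,j}$. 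Inverting closes the induction.

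The main obstacle I anticipate is exactly this mismatched-trajectory issue in the fusion step: the SDP certificate certifies positivity against the adaptive trajectory's own $\tilde P$'s, and only the telescoping via $\sum_j a_{i,j}\tilde P_{k,j|w}^{-1}$ bridges the gap. Once that split is in place, the other ingredients---monotonicity of $APA^T+Q$ and of the information-form Kalman update---are standard.
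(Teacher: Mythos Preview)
Your proposal is correct and follows essentially the same route as the paper's proof: induction through the three stages of the CDKF recursion, with the information-form identity $\tilde P_{k,i}^{-1}=\bar P_{k,i}^{-1}+H_{k,i}^TR_{k,i}^{-1}H_{k,i}$ for the update and the split of the fusion sum through $\sum_j a_{i,j}\tilde P_{k,j|w}^{-1}$ using $\Delta_{k,i}\geq 0$. If anything, you are more explicit than the paper about the mismatched-trajectory subtlety and the fallback case.
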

	\begin{proof}
		See Appendix \ref{pf_compare}.
	\end{proof}
	\subsection{Boundedness and Convergence of CDKF}
	
	{\color{black}Due to the consistency of CDKF in Theorem \ref{theorem_consistent}, the boundedness of $P_{k,i}$ implies the boundedness of covariance matrix. Thus, we draw the following boundedness conclusion on the proposed CDKF.}
	\begin{theorem}\label{thm_consistent}
		{\color{black}	Under Assumptions \ref{ass_noise}--\ref{ass_A}, }there exists a positive definite matrix $\hat P$, such that
		\begin{equation}\label{thm_compare}
		P_{k,i}\leq \hat P<\infty, \quad   \forall i\in \mathcal{V},\forall k\geq 0,
		\end{equation}
		where $P_{k,i}$ is the parameter matrix of the CDKF in Table \ref{ODKF2}.
	\end{theorem}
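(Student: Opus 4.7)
The plan is to prove a uniform lower bound on the information matrix $P_{k,i}^{-1}$; taking inverses yields the desired uniform upper bound $\hat P$. The argument has three stages: a reduction to the constant-weight algorithm, a window-by-window accumulation of information at the sensor level using strong connectivity plus global observability, and a forward propagation between windows using the bound on $A_k$ from Assumption~\ref{ass_A_bound}.

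First I would invoke Theorem~\ref{thm_compare_P}: since $P_{k,i|w}\le P_{k,i|a}$, it is enough to establish the upper bound for the CDKF driven by the fixed weights $w_{k,i,j}=a_{i,j}$. This eliminates the SDP step from the analysis. Observe also the two monotonicity facts that will be used repeatedly: the Kalman-type update yields the information identity $\tilde P_{k,i}^{-1}=\bar P_{k,i}^{-1}+H_{k,i}^TR_{k,i}^{-1}H_{k,i}$, so in particular $\tilde P_{k,i}^{-1}\ge \bar P_{k,i}^{-1}$; and the CI fusion with constant weights gives $P_{k,i}^{-1}=\sum_{j\in\mathcal N_i}a_{i,j}\tilde P_{k,j}^{-1}\ge a_{i,i}\tilde P_{k,i}^{-1}$, hence $P_{k,i}\le a_{i,i}^{-1}\tilde P_{k,i}\le a_{i,i}^{-1}\bar P_{k,i}$.

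Second, fix an index $l$ and analyse the window $[k_l,k_l+L]$, where Assumption~\ref{ass_A} guarantees $\lambda_{\min}(A_{k_l+s}A_{k_l+s}^T)\ge\beta_2$ for $s=0,\dots,L-1$, so every $A_{k_l+s}$ and the corresponding transition $\Phi_{j,k_l+s}$ is invertible with uniformly bounded inverse (also using Assumption~\ref{ass_A_bound}). Unrolling one prediction+update+fusion step in information form gives
\begin{equation*}
P_{k,i}^{-1}=\sum_{j\in\mathcal N_i}a_{i,j}\bar P_{k,j}^{-1}+\sum_{j\in\mathcal N_i}a_{i,j}H_{k,j}^TR_{k,j}^{-1}H_{k,j},
\end{equation*}
and iterating this recursion backward through the window, the invertibility of $A_{k_l+s}$ lets me transport each measurement information $H_{k_l+s,j}^TR_{k_l+s,j}^{-1}H_{k_l+s,j}$ up to time $k_l+L$ conjugated by $\Phi_{k_l+L,k_l+s}^{-1}$. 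The linear combinations appearing are governed by products of entries of $\mathcal A$: after $N-1$ fusion steps, Lemma~\ref{lem_primitive} guarantees that $[\mathcal A^{N-1}]_{i,j}>0$ for every pair $(i,j)$, so every sensor's measurement appears with a strictly positive, uniformly lower bounded coefficient in $P_{k_l+L,i}^{-1}$. Since $L\ge N+\bar N$, at least $\bar N$ further time steps of measurements are still available in the window, and summing over them produces (up to a strictly positive scalar and a uniformly bounded similarity by $\Phi$) precisely the global observability Gramian of Assumption~\ref{ass_observable}. This yields $P_{k_l+L,i}^{-1}\ge\gamma I_n$ uniformly in $l$ and $i$.

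Third, I would propagate this bound to all times. Between two consecutive instants $k_l+L$ and $k_{l+1}+L$ the gap is uniformly bounded by $\sup_l(k_{l+1}-k_l)+L<\infty$ (Assumption~\ref{ass_A}). Using the monotonicity inequalities from the first step together with $\bar P_{k,i}\le\beta_1\lambda_{\max}(P_{k-1,i})I_n+\bar Q_2$ from Assumption~\ref{ass_A_bound} and $Q_{k}\le\bar Q_2$, a finite iteration of these dominations transports the uniform bound $\gamma^{-1}I_n$ forward for a bounded number of steps, producing a global $\hat P$. The initial portion $k\in[0,k_1+L]$ is handled in the same finite-step fashion starting from $P_{0,i}$.

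The main obstacle is the second stage: rigorously combining the harmonic-mean nature of the CI fusion, the additive Kalman information update, and the inverse-transition through $A_k$ so that the emerging lower bound actually matches the global observability Gramian. This requires careful bookkeeping of the positive coefficients produced by $\mathcal A^s$ for $s\ge N-1$ and of the bounded similarities by $\Phi_{k_l+L,k_l+s}^{\pm 1}$, exploiting both $\lambda_{\max}(A_kA_k^T)\le\beta_1$ and $\lambda_{\min}(A_{k_l+s}A_{k_l+s}^T)\ge\beta_2$ throughout the window.
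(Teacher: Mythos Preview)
Your proposal is correct and follows essentially the same three-stage architecture as the paper's proof: reduction to constant weights via Theorem~\ref{thm_compare_P}, an information-form recursion over the window $[k_l,k_l+L]$ combining invertibility of $A_{k_l+s}$, strong connectivity (Lemma~\ref{lem_primitive}) and global observability (Assumption~\ref{ass_observable}) to obtain $P_{k_l+L,i}^{-1}\ge\gamma I_n$, and a finite forward propagation using Assumption~\ref{ass_A_bound} to cover the intervals between windows and the initial segment. The only technical ingredient you leave implicit but the paper invokes explicitly is the contraction inequality $\bar P_{k,i}^{-1}\ge\eta\,A_{k-1}^{-T}P_{k-1,i}^{-1}A_{k-1}^{-1}$ with $0<\eta<1$ (Lemma~1 of \cite{Battistelli2014Kullback}), which is precisely what makes the ``inverse-transition through $A_k$'' that you flag as the main obstacle go through.
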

	\begin{proof}
		According to Theorem \ref{thm_compare_P} and $P_{k,i}=P_{k,i|w}$, we only need to prove $ P_{k,i|a}$ can be uniformly upper bounded.
		Under Assumption \ref{ass_A}, one can pick out a subsequence set $\{k_{l_{m}},m\geq 1\}$ from the sequence set $\mathcal{K}=\{k_{l},l\geq 1\}$ such that $L\leq k_{l_{m+1}}-k_{l_{m}}$. Due to $\sup_{l\geq1} (k_{l+1}-k_{l})<\infty$, there exists a sufficiently large integer $\bar L$, such that  $L\leq k_{l_{m+1}}-k_{l_{m}}\leq\bar L$. Without loss of generality, we suppose the set $\{k_{l}\}$ has this property, i.e., $L\leq k_{l+1}-k_{l}\leq\bar L$, $\forall l\geq 1$ .
		To prove the boundedness of $ P_{k,i|a}$, we divide the sequence set $\{k_{l},l\geq 1\}$ into two  bounded and non-overlapping set :  $\{k_{l}+L,l\geq 1\}$ and $\bigcup_{l\geq 1} [k_{l}+L+1:k_{l+1}+L-1]$.

		Step 1: $k=k_{l}+L$, $l\geq1$.	
		
		At the $(k_{l}+L)$th moment, substituting $w_{k,i,j}=a_{i,j}$ into the CDKF, there is $P_{k_{l}+L,i|a}^{-1}=\sum_{j\in \mathcal{N}_{i}} a_{i,j}\tilde{P}_{k_{l}+L,j|a}^{-1}.$
		According to (\ref{equ_short}) and Assumptions \ref{ass_A_bound} and \ref{ass_A}, we can obtain
		\begin{align}\label{proof_stability13}
		P_{k_{l}+L,i|a}^{-1}&=\sum_{j\in \mathcal{N}_{i}} a_{i,j}(\bar P_{k_{l}+L,j|a}^{-1}+H_{k_{l}+L,j}^TR_{k_{l}+L,j}^{-1}H_{k_{l}+L,j})\nonumber\\
		&\geq  \eta\sum_{j\in \mathcal{N}_{i}}a_{i,j}A_{k_{l}+L-1}^{-T}P_{k_{l}+L-1,j|a}^{-1} A_{k_{l}+L-1}^{-1}\nonumber\\
		&\quad+\sum_{j\in \mathcal{N}_{i}} a_{i,j}H_{k_{l}+L,j}^TR_{k_{l}+L,j}^{-1}H_{k_{l}+L,j},
		\end{align}
		where $0< \eta<1$, and the last inequality is derived similarly to Lemma 1 in \cite{Battistelli2014Kullback} by noting the lower boundedness of  $A_{k_{l}+L-1}A_{k_{l}+L-1}^T$ and upper boundedness of $Q_{k}$.

		By recursively applying  (\ref{proof_stability13}) for $L$ times, there is
		\begin{equation}\label{proof_stability3}
		\begin{split}
		P_{k_{l}+L,i|a}^{-1}\geq& \eta^{L}\Phi_{k_{l}+L,k_{l}}^{-T}(\sum_{j\in \mathcal{V}} a_{i,j}^{L}P_{k_{l},j|a}^{-1})\Phi_{k_{l}+L,k_{l}}^{-1}\\
		&+\sum_{s=1}^{L}\eta^{s-1}\sum_{j\in \mathcal{V}} a_{i,j}^{s}S_{k_{l}+L-s+1,j},
		\end{split}
		\end{equation}
		where $\Phi_{k,j}$ is the state transition matrix defined in Assumption \ref{ass_observable} and 
		\begin{align*}
		\begin{cases}
		S_{k-s+1,j}=\Phi_{k,k-s+1}^{-T}\bar S_{k-s+1,j}
		\Phi_{k,k-s+1}^{-1}\\
		\bar S_{k-s+1,j}=H_{k-s+1,j}^T R_{k-s+1,j}^{-1}H_{k-s+1,j}.
		\end{cases}
		\end{align*}
		According to Assumption \ref{ass_topology} and Lemma \ref{lem_primitive}, there is $a^{s}_{i,j}>0,s\geq N$.
		Since the first part on the right side of  (\ref{proof_stability3}) is positive definite,  we consider the second part
		denoted as $\breve{P}_{k,i|a}^{-1}$. Then
		\begin{align}\label{proof_stability4}
		&\breve{P}_{k,i|a}^{-1}\nonumber\\
		=&
		\sum_{s=1}^{L}\eta^{s-1}\sum_{j\in \mathcal{V}} a_{i,j}^{s}S_{k_{l}+L-s+1,j}\nonumber\\
		\geq&\sum_{s=N}^{L}\eta^{s-1}\sum_{j\in \mathcal{V}} a_{i,j}^{s}S_{k_{l}+L-s+1,j}\nonumber\\
		\geq& a_{min}\sum_{s=N}^{L}\eta^{s-1}S_{k_{l}+L-s+1,j}\nonumber\\
		\geq& a_{min}\eta^{L-1}\sum_{b=k_{l}+1}^{k_{l}+1+L-N}\Phi_{k_{l}+L,b}^{-T}
		\sum_{j\in \mathcal{V}}H_{b,j}^T R_{b,j}^{-1}H_{b,j}\Phi_{k_{l}+L,b}^{-1}\nonumber\\
		\geq &a_{min}\eta^{L-1}\sum_{b=k_{l}+1}^{k_{l}+1+L-N}\Phi_{k_{l}+L,b}^{-T}
		H_{b}^T R_{b}^{-1}H_{b}\Phi_{k_{l}+L,b}^{-1},
		\end{align}
		where $a_{min}=arg\min_{i,j\in \mathcal{V}}{a_{i,j}^{s}>0,s\in [N:L]}$, $H_{k}$ and $R_{k}$ are defined in Assumption $\ref{ass_observable}$.

		Since the system (\ref{system_all}) is uniformly completely observable, there is
		\begin{align}\label{eq_obser}
		\sum_{j=k_{l}+1}^{k_{l}+1+\bar N}\Phi^T_{j,k_{l}+1}H_{j}^TR_{j}^{-1}H_{j}\Phi_{j,k_{l}+1}&=G_{k_{l}+1}^T\hat R_{k_{l}+1}^{-1}G_{k_{l}+1}\nonumber\\
		&\geq \alpha I_{n},\alpha>0,
		\end{align}
		where
		\begin{equation*}
		\begin{split}
		&G_{k}=blockcol\{H_{k},H_{k+1}\Phi_{k+1,k},\cdots,H_{k+\bar N}\Phi_{k+\bar N,k}\},\\
		&\hat R_{k}=blockdiag\{R_{k}, R_{k+1},\cdots,R_{k+\bar N}\}.
		\end{split}
		\end{equation*}
		Due to the nonsingularity of $A_{k},k\in[k_{l}:k_{l}+L-1],$ and the relationship $N+\bar N\leq L$, the matrix $F_{k_{l}+1}$ can be well defined as $F_{k_{l}+1}=\Phi_{k_{l}+1+\bar N,k_{l}+1}^{-1}.$
		Under Assumption \ref{ass_A}, for $k\in[k_{l}:k_{l}+L-1]$, there exists a positive real $\kappa$, such that $F_{k_{l}+1}^TF_{k_{l}+1}>\kappa I_{n}$.
		Thus, considering (\ref{eq_obser}), one can obtain that
		\begin{align}\label{eq_obser2}
		&\sum_{j=k_{l}+1}^{k_{l}+1+\bar N}\Phi^{-T}_{k_{l}+1+\bar N,j}H_{j}^TR_{j}^{-1}H_{j}\Phi_{k_{l}+1+\bar N,j}^{-1}\\
		=&F_{k_{l}+1}^TG_{k_{l}+1}^T\hat R_{k_{l}+1}^{-1}G_{k_{l}+1}F_{k_{l}+1}\geq \kappa\alpha I_{n},\alpha>0,\kappa>0.\nonumber
		\end{align}
		According to (\ref{eq_obser2}) and $L\geq N+\bar N$, $\breve{P}_{k,i|a}^{-1}$ in  (\ref{proof_stability4}) is lower bounded by a constant positive definite matrix. Thus, $P_{k_{l}+L,i|a}$ is upper bounded by a constant positive definite matrix $P$, i.e., $P_{k_{l}+L,i|a}\leq P$.
		
		\textbf{Step 2:} $k\in[k_{l}+L+1:k_{l+1}+L-1]$, $l\geq1$.
		
		Under Assumption \ref{ass_A_bound}, there is $A_{k}A_{k}^T\leq \beta_{1}I_{n}$. Due to $Q_{k}\leq\bar Q_2<\infty$, for $k\in[k_{l}+L+1:k_{l+1}+L-1]$, there is
		\begin{equation*}
		\begin{split}
		&P_{k,i|a}\\
		&\leq \big(\lambda_{max}(P_{k-1,i|a})\beta_{1}+\lambda_{max}(\bar Q_2)\big)I_{n}\\
		&\leq \big(\lambda_{max}(P_{k-2,i|a})\beta_{1}^2+\lambda_{max}(\bar Q_2)\beta_{1}+\lambda_{max}(\bar Q_2)\big)I_{n}\\
		&\qquad\qquad\vdots\\
		&\leq \bigg(\lambda_{max}(P)\beta_{1}^{k-k_{l}-L}+\sum_{j=0}^{k-k_{l}-L-1}\lambda_{max}(\bar Q_2)\beta_{1}^{j}\bigg)I_{n}\\
		&\leq \bigg(\lambda_{max}(P)\beta_{1}^{k_{l+1}-k_{l}-1}+\sum_{j=0}^{k_{l+1}-k_{l}-2}\lambda_{max}(\bar Q_2)\beta_{1}^{j}\bigg)I_{n}\\
		&\leq \bigg(\lambda_{max}(P)\beta_*^{\bar L-1}+\sum_{j=0}^{\bar L-2}\lambda_{max}(\bar Q_2)\beta_{1}^{j}\bigg)I_{n}\triangleq P^{mid},
		\end{split}
		\end{equation*}
		where $\beta_*=\max{(\beta_{1},1)}$.
		Since $k_{1}$ is finite, for $k\in[0:k_{1}+L-1]$, there exists a constant matrix $P^0$, such that $P_{k,i|a}\leq P^0$.
		Given $P$, $P^{mid}$ and $P^0$, according to Theorem \ref{thm_compare_P} and the division of $\{k_{l},l\geq 1\}$, it is straightforward to guarantee  (\ref{thm_compare}). \textbf{Q.E.D.}
	\end{proof}
	
	{\color{black}
		Under the boundedness of $P_{k,i}$ provided in Theorem \ref{thm_consistent}, we can obtain Theorem \ref{thm_noisefree} which depicts the convergence of the proposed CDKF.
		\begin{theorem}\label{thm_noisefree}
			Under Assumptions \ref{ass_noise}--\ref{ass_topology}, if $\{A_{k}\}_{k=0}^{\infty}$ belongs to a nonsingular compact set, then there is
			\begin{equation}\label{zero_covergence}
			\lim\limits_{k\rightarrow +\infty}E\{\hat x_{k,i}-x_{k}\}=0, \quad   \forall i\in \mathcal{V}.
			\end{equation}
		\end{theorem}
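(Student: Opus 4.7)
The plan is to reduce the convergence-in-expectation statement to stability of a noise-free, linear, time-varying recursion, and then prove that stability by constructing a Lyapunov function built from the very upper-bound matrices $P_{k,i}$ that Theorem \ref{thm_consistent} has already shown to be bounded.

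First I would take expectations in the error recursion derived in the proof of Theorem \ref{theorem_consistent}. Because $\omega_{k-1}$ and $v_{k,i}$ are zero-mean and independent of past information, setting $m_{k,i}\triangleq E\{e_{k,i}\}=E\{\hat x_{k,i}-x_k\}$ gives the deterministic dynamics
\begin{equation*}
m_{k,i}=P_{k,i}\sum_{j\in\mathcal N_i}w_{k,i,j}\tilde P_{k,j}^{-1}(I-K_{k,j}H_{k,j})A_{k-1}m_{k-1,j},
\end{equation*}
with $\bar m_{k,i}=A_{k-1}m_{k-1,i}$ and $\tilde m_{k,i}=(I-K_{k,i}H_{k,i})\bar m_{k,i}$. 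Thus (\ref{zero_covergence}) is equivalent to asymptotic stability of this time-varying linear system.

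Next I would introduce the Lyapunov candidate $V_k=\sum_{i\in\mathcal V}m_{k,i}^TP_{k,i}^{-1}m_{k,i}$ and verify monotonic non-increase in three substeps. In the local CI fusion, the quadratic form inequality already used to derive (\ref{e_k0101}) gives $m_{k,i}^TP_{k,i}^{-1}m_{k,i}\leq\sum_{j\in\mathcal N_i}w_{k,i,j}\tilde m_{k,j}^T\tilde P_{k,j}^{-1}\tilde m_{k,j}$. The Joseph form of the measurement update yields $\tilde m_{k,i}^T\tilde P_{k,i}^{-1}\tilde m_{k,i}\leq\bar m_{k,i}^T\bar P_{k,i}^{-1}\bar m_{k,i}-\bar m_{k,i}^TH_{k,i}^T(H_{k,i}\bar P_{k,i}H_{k,i}^T+R_{k,i})^{-1}H_{k,i}\bar m_{k,i}$. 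Finally, since $\{A_k\}$ lies in a nonsingular compact set, each $A_{k-1}$ is invertible and $\bar P_{k,i}\geq A_{k-1}P_{k-1,i}A_{k-1}^T$ gives $\bar m_{k,i}^T\bar P_{k,i}^{-1}\bar m_{k,i}\leq m_{k-1,i}^TP_{k-1,i}^{-1}m_{k-1,i}$. Combining the three inequalities and summing over $i\in\mathcal V$ (noting that $\sum_i w_{k,i,j}$ is bounded uniformly because the network is finite) shows $V_k$ is non-increasing with an explicit negative contribution proportional to $\sum_i\bar m_{k,i}^TH_{k,i}^T(H_{k,i}\bar P_{k,i}H_{k,i}^T+R_{k,i})^{-1}H_{k,i}\bar m_{k,i}$.

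The hard part will be to upgrade this non-increase to a strict geometric decay $V_{k+L}\leq\gamma V_k$ with some $\gamma\in(0,1)$ and horizon $L=N+\bar N$. Local observability of a single $(A_k,H_{k,i})$ is insufficient, so information must be allowed to propagate across the network. I would iterate the chain of inequalities above over $L$ consecutive steps and, exactly as in the boundedness argument of Theorem \ref{thm_consistent} (see the derivation through (\ref{proof_stability3})--(\ref{eq_obser2})), use Lemma \ref{lem_primitive} so that after $N$ steps the composite network weights are strictly positive on every edge, and then use the uniform complete observability of the stacked pair $(A_k,H_k)$ from Assumption \ref{ass_observable} over the remaining $\bar N$ steps. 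The compactness of $\{A_k\}$ guarantees that state transition matrices are uniformly bounded above and below in norm, and Theorem \ref{thm_consistent} guarantees that $P_{k,i}^{-1}$ is uniformly lower-bounded by a positive constant times the accumulated observability Gramian. Together these yield a strict drop $V_{k+L}\leq\gamma V_k$ uniform in $k$, hence $V_k\to 0$ geometrically, which in turn forces $m_{k,i}\to 0$ for every $i\in\mathcal V$ and completes the proof of (\ref{zero_covergence}).
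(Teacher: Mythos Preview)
Your reduction to the deterministic mean dynamics and the per-step inequalities for CI fusion, measurement update, and prediction are all correct (the measurement-update inequality is in fact an equality, since $\tilde P_{k,i}^{-1}(I-K_{k,i}H_{k,i})=\bar P_{k,i}^{-1}$). The proposal breaks down, however, at the moment you sum over nodes. On a directed graph the weight matrix $[w_{k,i,j}]$ is only row-stochastic, so the column sums $\sum_i w_{k,i,j}$ can exceed $1$; ``bounded uniformly because the network is finite'' does not give $V_k\leq V_{k-1}$. Your scalar Lyapunov $V_k=\sum_i m_{k,i}^TP_{k,i}^{-1}m_{k,i}$ is therefore not non-increasing in general, and the subsequent observability argument never gets off the ground. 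Even if this were fixed, the ``hard part'' is harder than you indicate: the negative terms $\bar m_{k,i}^TH_{k,i}^T(\cdot)^{-1}H_{k,i}\bar m_{k,i}$ involve \emph{different} vectors $\bar m_{k,i}$ for different $i$, so they do not assemble into a single global Gramian acting on one vector as in (\ref{proof_stability3})--(\ref{eq_obser2}), which is an argument about the matrices $P_{k,i}^{-1}$, not about means.

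The paper's proof avoids both difficulties. First, it keeps the Lyapunov inequality componentwise, $V_{k+1,i}(E\{\bar e_{k+1,i}\})\leq \check\beta\sum_{j}w_{k,i,j}V_{k,j}(E\{\bar e_{k,j}\})$, and then uses that a row-stochastic matrix has spectral radius $1$ (equivalently, $\max_i V_{k+1,i}\leq\check\beta\max_j V_{k,j}$); no column sums are needed. Second, and more importantly, the strict factor $\check\beta<1$ is obtained \emph{at every single step} from the process-noise lower bound $Q_k\geq\bar Q_1>0$ together with the uniform bound $P_{k,i}\leq\hat P$ already supplied by Theorem~\ref{thm_consistent}: since $\bar P_{k+1,i}=A_kP_{k,i}A_k^T+Q_k$, one has $\bar P_{k+1,i}^{-1}\leq\check\beta\,A_k^{-T}P_{k,i}^{-1}A_k^{-1}$ for some $\check\beta\in(0,1)$ independent of $k$ (this is Lemma~1(iii) of \cite{Battistelli2014Kullback}). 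Observability is thus used only indirectly, through Theorem~\ref{thm_consistent}, and no multi-step horizon is needed in the convergence proof itself. Replace your summed Lyapunov and $L$-step observability plan with these two ingredients and the argument closes immediately.
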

		\begin{proof}
			Under the conditions of this theorem, the conclusion of Theorem \ref{thm_consistent} holds. Thus, $P_{k,i}$ is uniformly upper bounded.
			Then according to Assumptions \ref{ass_noise} and \ref{ass_A_bound}, $\bar P_{k,i}$ is uniformly upper bounded and lower bounded.
			Thus we can define the following Lyapunov function
			\begin{equation*}
			V_{k,i}(x)=x^T\bar P_{k,i}^{-1}x.
			\end{equation*}
			From the fact (\expandafter{\romannumeral3}) of Lemma 1 in \cite{Battistelli2014Kullback} and the invertibility of $A_{k}$, there is
			\begin{equation}\label{iteration_V1}
			\begin{split}
			&V_{k+1,i}(E\{\bar e_{k+1,i}\})\\
			=&E\{\bar e_{k+1,i}\}^T\bar P_{k+1,i}^{-1}E\{\bar e_{k+1,i}\}\\
			\leq& \check{\beta}E\{\bar e_{k+1,i}\}^TA_{k}^{-T}P_{k,i}^{-1}A_{k}^{-1}E\{\bar e_{k+1,i}\},
			\end{split}
			\end{equation}
			where $0<\check{\beta}<1$. 
			
			{\color{black}
				Due to $\bar e_{k+1,i}=A_{k}e_{k,i}-w_{k}$, $E\{w_{k}\}=0$ and (\ref{iteration_V1}), one can obtain}
			\begin{equation}\label{V_ineq}
			V_{k+1,i}(E\{\bar e_{k+1,i}\})\leq \check{\beta}E\{e_{k,i}\}^TP_{k,i}^{-1}E\{e_{k,i}\}.
			\end{equation}
			Notice that $P_{k,i}=(\sum_{j\in \mathcal{N}_{i}}w_{k,i,j}\tilde P_{k,j}^{-1})^{-1}$,
			%
			there is
			\begin{equation}\label{x_iteration2}
			x_{k}=P_{k,i}\sum_{j\in \mathcal{N}_{i}}w_{k,i,j}\tilde P_{k,j}^{-1} x_{k}.
			\end{equation}
			Hence, according to CDKF and  (\ref{x_iteration2}), the estimation error $e_{k,i}$ satisfies
			\begin{equation}\label{ek}
			\begin{split}
			&E\{e_{k,i}\}\\
			=&E\{\hat x_{k,i}-x_{k}\}\\
			=&P_{k,i}\sum_{j\in \mathcal{N}_{i}}w_{k,i,j}\tilde P_{k,j}^{-1}E\{\phi_{k,j}-x_{k}\}\\
			=&P_{k,i}\sum_{j\in \mathcal{N}_{i}}w_{k,i,j}\tilde P_{k,j}^{-1}(I-K_{k,j}H_{k,j})E\{\bar e_{k,j}\}.
			\end{split}
			\end{equation}
			Since $\tilde P_{k,i}=(I-K_{k,i}H_{k,i})\bar P_{k,i}$, one can get
			\begin{equation}\label{P_imd}
			\begin{split}
			\tilde P_{k,i}^{-1}(I-K_{k,i}H_{k,i})=\bar P_{k,i}^{-1}.
			\end{split}
			\end{equation}
			Substituting  (\ref{P_imd}) into  (\ref{ek}), we have
			\begin{equation}\label{ek2}
			\begin{split}
			E\{e_{k,i}\}=P_{k,i}\sum_{j\in \mathcal{N}_{i}}w_{k,i,j}\bar P_{k,j}^{-1}E\{\bar e_{k,j}\}.
			\end{split}
			\end{equation}
			Since $P_{k,i}=(\sum_{j\in \mathcal{N}_{i}}w_{k,i,j}\tilde P_{k,j}^{-1})^{-1}$, there is
			\begin{equation}\label{P_inv}
			\begin{split}
			P_{k,i}&=(\sum_{j\in \mathcal{N}_{i}}w_{k,i,j}(\bar P_{k,j}^{-1}+H_{k,j}^TR_{k,j}^{-1}H_{k,j}))^{-1}\\
			&\leq (\sum_{j\in \mathcal{N}_{i}}w_{k,i,j}\bar P_{k,j}^{-1})^{-1}.
			\end{split}
			\end{equation}
			Applying  (\ref{ek2}),   (\ref{P_inv}) and Lemma 2 in \cite{Battistelli2014Kullback} to the right hand of  (\ref{V_ineq}), one can obtain that
			%
			\begin{equation}\label{V_ineq_final3}
			\begin{split}
			&V_{k+1,i}(E\{\bar e_{k+1,i}\})\\
			\leq&\check{\beta}\sum_{j\in \mathcal{N}_{i}}w_{k,i,j}E\{\bar e_{k,j}\}^T\bar P_{k,j}^{-1}E\{\bar e_{k,j}\}\\
			\leq&\check{\beta}\sum_{j\in \mathcal{N}_{i}}w_{k,i,j}V_{k,j}(E\{\bar e_{k,j}\}).
			\end{split}
			\end{equation}
			Denote $\mathcal{A}_{k}=[w_{k,i,j}],i,j=1,2,\cdots,N$. Summing up  (\ref{V_ineq_final3}) for $i=1,2,\cdots,N$, then there is
			\begin{equation}\label{V_ineq_final4}
			\begin{split}
			V_{k+1}(E\{\bar e_{k+1}\})
			&{\color{black}\leq \check{\beta}\mathcal{A}_{k}V_{k}(E\{\bar e_{k}\}), \quad 0<\check{\beta}<1,}
			\end{split}
			\end{equation}
			where  
			\begin{align*}
			V_{k}(E\{\bar e_{k}\})=col\{V_{k,1}(E\{\bar e_{k,1}\}),\cdots,V_{k,N}(E\{\bar e_{k,N}\})\}.
			\end{align*}
			According to Lemma \ref{lemma_1}, $\mathcal{A}_{k}$ is a row stochastic matrix at each moment, thus the spectral radius of $\mathcal{A}_{k}$ is always 1. 
			Due to $0<\check{\beta}<1$, $\lim\limits_{k\rightarrow +\infty}E\{\bar e_{k+1,i}\}= 0$. Under the equation $E\{\bar e_{k+1,i}\}=A_{k}E\{e_{k,i}\}$ and the assumption that $\{A_{k}\}_{k=0}^{\infty}$ belongs to a nonsingular compact set, 
			the conclusion of this theorem holds. \textbf{Q.E.D.}
			%
			%
		\end{proof}
	}
	\begin{remark}
		{\color{black}The reason for the invertibility of $A_{k}$ in Theorem  \ref{thm_noisefree} is} that the proof using Lyapunov method to guarantee the convergence needs the $\bar P_{k,i}^{-1}$ (see  (\ref{iteration_V1})), whose iteration requires the invertibility of $A_{k}$. While, in the proof of the boundedness of covariance matrix in Theorem \ref{thm_consistent}, we used another method to relax the invertibility of $A_{k}$.
	\end{remark}

	\section{Simulation Studies}
	To demonstrate the aforementioned theoretical results, two simulation examples in this section will be studied. In the first example, both the consistency and the boundedness of covariance matrix  will be illustrated. The performance of adaptive CI weights will be compared with that of constant CI weights as well as the algorithm in Table \ref{ODKF1}. In the second example, the proposed algorithm is compared with some other algorithms.
	
	\subsection{Performance evaluation}
	Consider the following second-order time-varying stochastic system with four sensors in the network
	\begin{equation*}
	\begin{cases}
	x_{k+1}=\left(
	\begin{array}{cc}
	1.1 & 0.05 \\
	1.1 & 0.1sin(\frac{k\pi}{6}) \\
	\end{array}
	\right)
	x_{k}+\omega_{k},\\
	y_{k,i}=H_{k,i}x_{k}+v_{k,i},i=1,2,3,4,
	\end{cases}
	\end{equation*}
	where the observation matrices of the sensors are
	\begin{align*}
	\begin{cases}
	H_{k,1}=\left(
	\begin{array}{cc}
	1+sin(\frac{k\pi}{12}) & 0
	\end{array}
	\right),\\
	H_{k,2}=\left(
	\begin{array}{cc}
	0 & 0 \\
	\end{array}
	\right),\\
	H_{k,3}=\left(
	\begin{array}{cc}
	-1 & 1+cos(\frac{k\pi}{12}) \\
	\end{array}
	\right),\\
	H_{k,4}=\left(
	\begin{array}{cc}
	0 & 0 \\
	\end{array}
	\right).
	\end{cases}
	\end{align*}
	Here, it is assumed that the process noise covariance matrix $Q_{k}=diag\{0.5,0.7\}$, and the whole measurement noise covariance matrix $R_{k}=diag\{0.5,0.6,0.4,0.3\}$. The initial value of the state is generated by a Gaussian process with zero mean and covariance matrix $I_2$, and the initial estimation settings are $\hat x_{i,0}=0$ and $P_{i,0}=I_2$, $\forall i=1,2,3,4$. The sensor network's communication topology, assumed as directed and strongly connected, is illustrated in Fig. \ref{topology}. The weighted adjacent matrix $\mathcal{A}=[a_{i,j}]$ is designed as
	$a_{i,j}=\frac{1}{|\mathcal{N}_{i}|},j\in\mathcal{N}_{i},i,j=1,2,3,4$. 
	We conduct the numerical simulation through Monte Carlo experiment, in which $500$ Monte Carlo trials  are performed. The mean square error of a whole network is defined as
	\begin{equation*}
	MSE_{k}=\frac{1}{N}\sum_{i\in\mathcal{V}}\frac{1}{500}\sum_{j=1}^{500}(\hat x_{k,i}^j-x_{k}^j)^T(\hat x_{k,i}^j-x_{k}^j),
	\end{equation*}
	where $\hat x_{k,i}^j$ is the state estimation of the $j$th trail of Sensor $i$ at the $k$th moment.

	To show the ability of CDKF in coping with the singularity of system matrices, the determinants of time-varying  system matrices are plotted in Fig. \ref{A_singular}. The tracking graph for the system states is shown in Fig. \ref{tracking}. From Fig. \ref{A_singular} and Fig. \ref{tracking}, it can be seen that under the case that system matrices are singular, the proposed algorithm CDKF has effective tracking performance for each state element. Fig. \ref{tracking} also shows the unbiasedness of estimations conforming with Theorem  \ref{thm_distri}.
	The consistency of CDKF is depicted in Fig. \ref{consictence}, where $MSE_{k}$ is compared with $tr(\sum_{i=1}^{4}P_{k,i})$. From this figure,  it is found that the CDKF keeps stable in the given period and the estimation error can be evaluated in real time. Besides, Fig. \ref{consictence} also shows the comparison between the algorithm in Table \ref{ODKF1} and the proposed CDKF in Table \ref{ODKF2}. We can see that although the CDKF is sub-optimal, the estimation performance of CDKF is very close  to the networked Kalman filter with optimal gain parameter which utilizes the global information. 
	To test the effectiveness of the SDP optimization algorithm for adaptive weights in the CI strategy, we compare the algorithms with adaptive CI  weights and constant CI weights. The comparison results are shown in Fig. \ref{comparisionwk1} and Fig. \ref{comparisionwk}. Fig. \ref{comparisionwk1} shows that the results on parameter matrix $P$ in the two algorithms conform with the aforementioned theoretical analysis in Theorem \ref{thm_compare_P}.
	Fig. \ref{comparisionwk} implies that the optimization algorithm for adaptive weights is quite effective in improving the estimation performance.

	The above results reveal that the proposed CDKF is an effective and flexible distributed state estimation algorithm.
	
	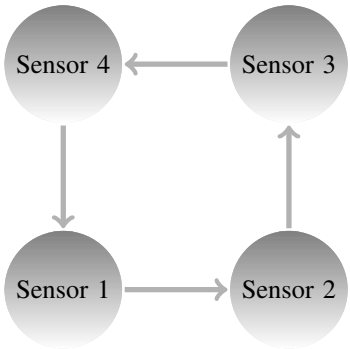
\begin{figure}[htp]
		\centering
		\begin{tikzpicture}[scale=1, transform shape,line width=2pt]
		\tikzstyle{every node} = [circle,shade, fill=gray!30]
		\node (a) at (0, 0) {Sensor 1};
		\node (b) at +(0: 1.5*2) {Sensor 2};
		\node (c) at +(45: 2.1213*2) {Sensor 3};
		\node (d) at +(90: 1.5*2) {Sensor 4};
		\foreach \from/\to in {a/b, b/c, c/d, d/a}
		\draw [black!30,->] (\from) -- (\to) ;
		\end{tikzpicture}
		\caption{The topology of the sensor network}\label{topology}
	\end{figure}

	
	\begin{figure}
		\centering
		\includegraphics[width=0.5\textwidth]{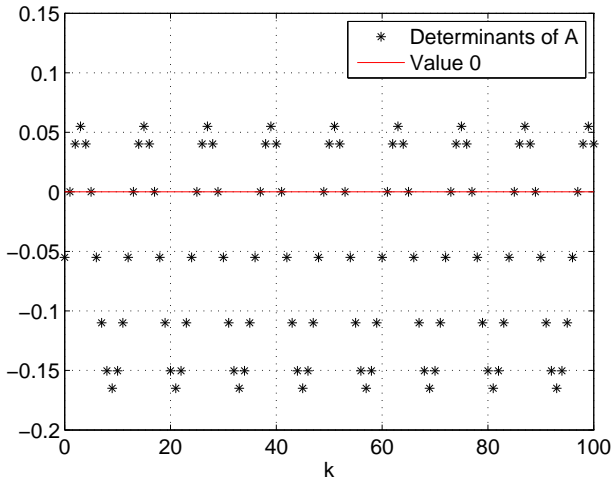}
		\caption{Determinants of system matrix A: singular points}
		\label{A_singular}
	\end{figure}

	\begin{figure}
		\centering
		\includegraphics[width=0.5\textwidth]{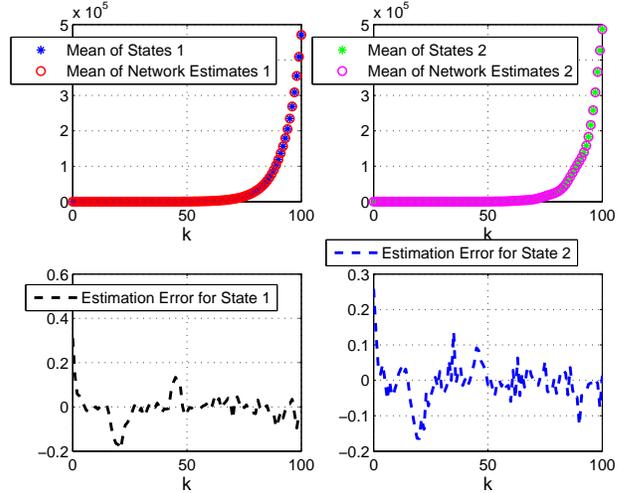}
		\caption{Network tracking for each state}
		\label{tracking}
	\end{figure}
	\begin{figure}
		\centering
		\includegraphics[width=0.5\textwidth]{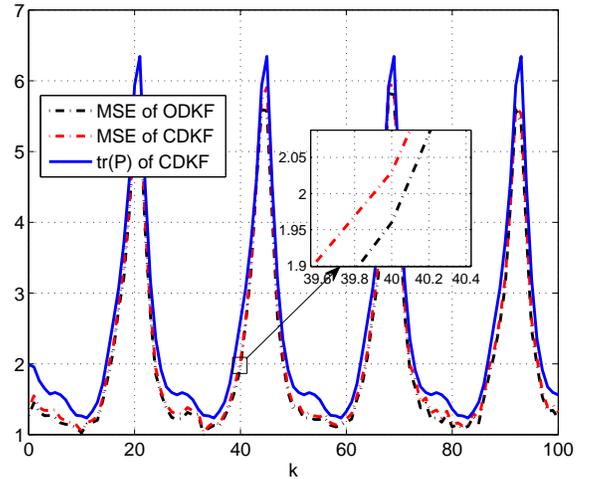}
		\caption{Network Estimation of the algorithm in Table \ref{ODKF1} and CDKF in Table \ref{ODKF2}}\label{consictence}
	\end{figure}
	\begin{figure}
		\centering
		\includegraphics[width=0.5\textwidth]{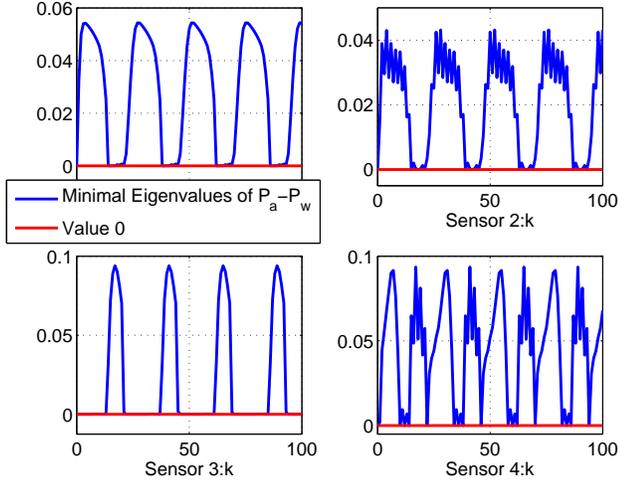}
		\caption{Minimal eigenvalues of $P_a-P_w$ for each sensor}
		\label{comparisionwk1}
	\end{figure}
	\begin{figure}
		\centering
		\includegraphics[width=0.5\textwidth]{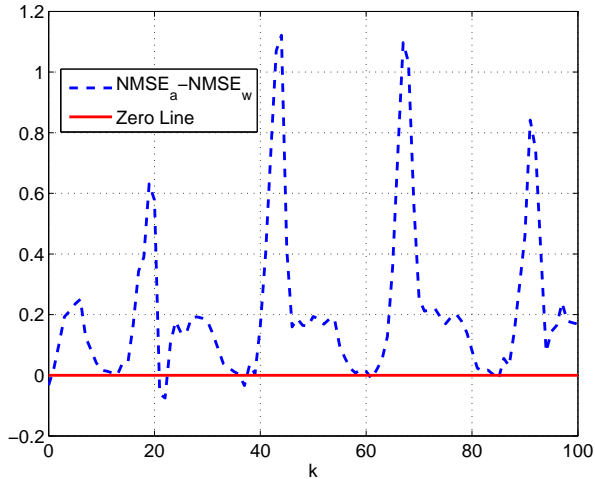}
		\caption{Network MSE (NMSE) of two kinds of weights}\label{comparisionwk}
	\end{figure}
	\subsection{Comparisons with other algorithms}
	In this subsection, numerical simulations are carried out to compare the proposed CDKF with some other algorithms including CKF, Collaborative Scalar-gain Estimator (CSGF) \cite{Khan2014Collaborative} and Distributed State
	Estimation with Consensus on the Posteriors (DSEA-CP) \cite{Battistelli2014Kullback}, which is a information form of distributed Kalman filter.
	The algorithm CKF is the optimal centralized algorithm.
	Here, we consider the simulation example studied in \cite{Khan2014Collaborative} on CSGF. This example is based on time-invariant systems, i.e., for system (\ref{system_all}), $A_{k}=A$, $Q_{k}=Q$, $H_{k,i}=H_{i}$ and $R_{k,i}=R_{i},\forall i\in \mathcal{V},\forall k=0,1,\ldots$.
	The topology of the sensor network consisting of 20 sensors, assumed as undirected and connected,  is illustrated in Fig. \ref{topology2}.
	The weighted adjacent matrix $\mathcal{A}=[a_{i,j}]$ is designed as
	$a_{i,j}=\frac{1}{|\mathcal{N}_{i}|},j\in\mathcal{N}_{i},i,j=1,\ldots,N$. The system matrices are assumed to be $Q=diag\{1,1\},\text{ }R_i=1,i\in\mathcal{V}$ and $
	A=\left(
	\begin{array}{cc}
	1 & 0.05 \\
	0 & 1 \\
	\end{array}
	\right)
	$.
	The observation matrices of these sensors are uniformly randomly selected from $\{(1,1),(0,0),(0,0),(1,0)\}$.
	The initial value of the state is generated by a Gaussian process with zero mean and covariance matrix $I_2$, and the initial estimation settings are $\hat x_{i,0}=0$ and $P_{i,0}=I_2$, $\forall i\in\mathcal{V}$.
	We conduct the numerical simulation through Monte Carlo experiment, in which
	$500$ Monte Carlo trials for CKF, CDKF, CSGF and DSEA-CP are performed, respectively.
	The comparison of estimation error dynamic is carried out for the four algorithms, and the result can be seen in Fig. \ref{comparision_case1}.
	From this figure, we see that the four algorithms are all stable and the estimation performance of CDKF is better than CSGF as well as DSEA-CP. The reason that  CKF, CDKF and DSEA-CP have better estimation performance than CSGF is the time-varying gain matrices in the measurement updates. Additionally, the estimation performance of CDKF is nearer to CKF.

	\begin{figure}
		\centering
		\includegraphics[width=0.5\textwidth]{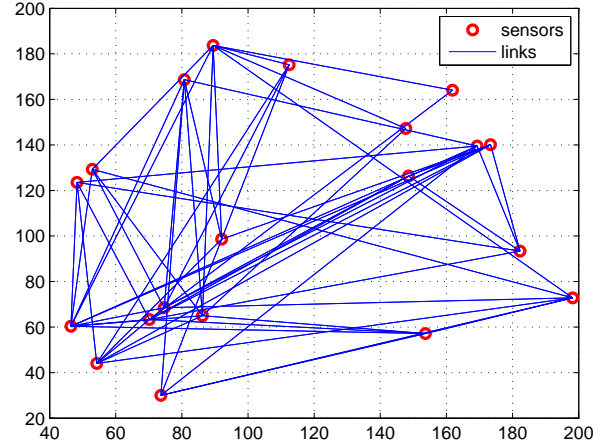}
		\caption{The topology of the sensor network with 20 sensors}
		\label{topology2}
	\end{figure}
	\begin{figure}
		\centering
		\includegraphics[width=0.5\textwidth]{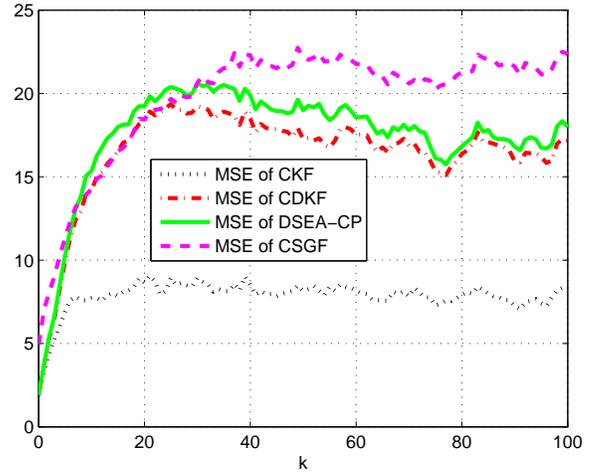}
		\caption{The performance comparison of filters}
		\label{comparision_case1}
	\end{figure}

	%

	
	\section{Conclusion}
	This paper has investigated the distributed state estimation problem for a class of discrete time-varying systems.
	Since the networked Kalman filter with optimal gain parameter needs the global information which is usually forbidden in large networks,  a sub-optimal  distributed Kalman filter based on CI fusion method was proposed.
	The consistency of the algorithm can provide an effective method to evaluate the estimation error in real time. In order to improve the estimation performance at each moment, the design of adaptive CI weights was casted through an optimization problem, which can be solved with a convex SDP optimization method.
	Additionally, it was
	proven that the adaptive CI weights can give rise to a lower error covariance bound than the constant CI weights.
	{\color{black}For the proposed algorithm,
		the boundedness of covariance matrix and the convergence  have been analyzed based on the global observability condition and the strong connectivity of the directed network topology, which are  general requirements for distributed state estimations.}
	Additionally, the proposed algorithm has loosen the  nonsingularity of system matrix in the main distributed filter designs under the condition of global observability.
	{\color{black}The simulation examples have shown the effectiveness of the proposed algorithm in the considered scenarios.}

	\bibliographystyle{plain}        

	\bibliography{references_filtering}           
	
	
	
	\appendix
	\section{Proof of Theorem \ref{thm_distri}}\label{pf_dis}
	According to the CDKF of Table \ref{ODKF2}, the estimation  error is
	\begin{align}\label{e_k0}
	e_{k,i}&=\hat x_{k,i}-x_{k}=P_{k,i}\sum_{j\in \mathcal{N}_{i}}w_{k,i,j}\tilde P_{k,j}^{-1}(\phi_{k,j}-x_{k})\nonumber\\
	&=P_{k,i}\sum_{j\in \mathcal{N}_{i}}w_{k,i,j}\tilde P_{k,j}^{-1}\tilde e_{k,j},
	\end{align}
	where $\tilde e_{k,i}$ is the estimation error in the measurement update, which can be derived through
	\begin{align}\label{e_k1}
	\tilde e_{k,i}&=\phi_{k,i}-x_{k}=\bar x_{k,i}+K_{k,i}(y_{k,i}-H_{k,i}\bar x_{k,i})-x_{k}\nonumber\\
	&=(I-K_{k,i}H_{k,i})\bar e_{k,i}+K_{k,i}v_{k,i}.
	\end{align}
	The prediction error $\bar e_{k,i}$ follows from
	\begin{equation}\label{e_bar1}
	\begin{split}
	\bar e_{k,i}&=\bar x_{k,i}-x_{k}=A_{k-1}e_{k-1,i}-\omega_{k-1}.
	\end{split}
	\end{equation}	
	Thus, from  (\ref{e_k0})-(\ref{e_bar1}), there is
	\begin{align}\label{Eq_guassian}
	e_{k,i}&=P_{k,i}\sum_{j\in \mathcal{N}_{i}}w_{k,i,j}\tilde P_{k,j}^{-1}\tilde e_{k,j}\nonumber\\
	&=P_{k,i}\sum_{j\in \mathcal{N}_{i}}w_{k,i,j}\tilde P_{k,j}^{-1}\big((I- K_{k,j}H_{k,j})\bar e_{k,j}+K_{k,j}v_{k,j}\big)\nonumber\\
	&=P_{k,i}\sum_{j\in \mathcal{N}_{i}}w_{k,i,j}\tilde P_{k,j}^{-1}(I- K_{k,j}H_{k,j})A_{k-1}e_{k-1,j}\nonumber\\
	&\quad-P_{k,i}\sum_{j\in \mathcal{N}_{i}}w_{k,i,j}\tilde P_{k,j}^{-1}(I- K_{k,j}H_{k,j})\omega_{k-1}\nonumber\\
	&\quad+P_{k,i}\sum_{j\in \mathcal{N}_{i}}w_{k,i,j}\tilde P_{k,j}^{-1}K_{k,j}v_{k,j}.
	\end{align}		
	Thanks to the  Gaussian property acting on  (\ref{Eq_guassian}), and the Gaussianity of $e_{0,i}$, $\omega_{k-1}$ and $v_{k,i}$,  $e_{k,i}$ is also Gaussian.		
	Due to $E\{\omega_{k-1}\}=0$ and $E\{v_{k,i}\}=0,$	from  (\ref{Eq_guassian}) one can obtain that
	\begin{equation*}
	\begin{split}
	&E\{e_{k,i}\}\\
	=&P_{k,i}\sum_{j\in \mathcal{N}_{i}}w_{k,i,j}\tilde P_{k,j}^{-1}(I- K_{k,j}H_{k,j})A_{k-1}E\{e_{k-1,j}\}.
	\end{split}
	\end{equation*}	
	Under the initial condition of this algorithm, there is $E\{e_{0,i}\}=0,\forall i\in\mathcal{V}$. Therefore, it is straightforward to rectify  (\ref{lem_unbias}). \textbf{Q.E.D.}
	
	{\color{black}
		\section{Proof of Lemma \ref{lemma_1}}\label{pf_lemma1} 
		According to Lemma \ref{Schur_Complement},  (\ref{eq_constraint1}) is equivalent to 
		\begin{align*}
		\Delta_{k,i}>0, M_{k,i}-\Delta_{k,i}^{-1}>0.
		\end{align*}
		Then we have $M_{k,i}>\Delta_{k,i}^{-1}$. Hence, $tr(M_{k,i})>tr(\Delta_{k,i}^{-1})$. Under the condition, the parameter set $\{w_{k,i,j},m_{k,i_l}\}$ that minimizes $tr(M_{k,i})$ will lead to the minimization of $tr(\Delta_{k,i}^{-1})$, and vice versa. \textbf{Q.E.D.}
		
		\section{Proof of Lemma \ref{lemma_2}}\label{pf_lemma2} 
		First, considering the objective function (\ref{objective_weight}), we set a vector $z$ with the form in Table \ref{SDP} to rewrite the objective function with the SDP algorithm form. After getting optimized $z^*$, we can obtain $\{w_{k,i,j}\}$ through the equation in 2) of Table \ref{SDP}.
		Then, the LMI (\ref{eq_constraint1}) is written as the third inequality constraint in Table \ref{SDP}, which consists of three types of $F_{j}$ listed in the table. Finally, the first and the second inequality constraints in Table \ref{SDP} correspond to the constraints $\sum_{j\in \mathcal{N}_{i}} w_{k,i,j}=1$ and $0\leq  w_{k,i,j}\leq 1,j\in \mathcal{N}_{i}$, respectively. Therefore, we can obtain the general form of SDP algorithm in Table \ref{SDP}. \textbf{Q.E.D.}

		\section{Proof of Theorem \ref{thm_compare_P}}  \label{pf_compare}           
		Here we utilize the inductive method to give the proof of this theorem.
		Firstly, at the $(k-1)$th moment, it is supposed that $P_{k-1,i|w}\leq P_{k-1,i|a},\forall i\in \mathcal{V}.$
		Then due to the prediction equation $\bar P_{k,i}=A_{k-1}P_{k-1,i}A_{k-1}^T+Q_{k-1},$
		there is
		\begin{align}\label{compare_1}
		\bar P_{k,i|w}\leq\bar P_{k,i|a}.
		\end{align}
		Exploiting the matrix inverse formula on the measurement update equation of CDKF, the following iteration holds
		\begin{align}
		&\tilde P_{k,i}^{-1}=\bar P_{k,i}^{-1}+H_{k,i}^TR_{k,i}^{-1}H_{k,i}.\label{equ_short}
		\end{align}
		Then from  (\ref{compare_1}),  we have $\tilde P_{k,i|w}^{-1}=\bar P_{k,i|w}^{-1}+H_{k,i}^T R_{k,i}^{-1}H_{k,i}\\
		\geq \bar P_{k,i|a}^{-1}+H_{k,i}^TR_{k,i}^{-1}H_{k,i}=\tilde P_{k,i|a}^{-1}.$
		Due to the equation in the local fusion process $P_{k,i}=(\sum_{j\in \mathcal{N}_{i}}w_{k,i,j}\tilde P_{k,j}^{-1})^{-1},$
		and the optimization condition for adaptive CI weights in (\ref{eq_constraint1}),
		there is
		\begin{equation*}
		\begin{split}
		P_{k,i|w}^{-1}&=\sum_{j\in \mathcal{N}_{i}}w_{k,i,j}\tilde P_{k,j|w}^{-1},\geq\sum_{j\in \mathcal{N}_{i}}a_{i,j}\tilde P_{k,j|w}^{-1}\\
		&\geq\sum_{j\in \mathcal{N}_{i}}a_{i,j}\tilde P_{k,j|a}^{-1}=P_{k,i|a}^{-1}.
		\end{split}
		\end{equation*}
		Therefore, $P_{k,i|w}\leq P_{k,i|a}.$
		The proof is finished with the initial conditions satisfying $P_{0,i|w}=P_{0,i|a}.$ \textbf{Q.E.D.}
	}
\end{document}